%
%
%
%
%
\documentclass{svjour3}                     
\usepackage[colorlinks=true,citecolor=blue]{hyperref}
\usepackage{graphicx}
\usepackage[margin=2.5cm]{geometry}
\usepackage{amsmath}
\usepackage{graphics} 
\usepackage{graphicx}  
\usepackage{epsfig} 
\usepackage{caption}
  \captionsetup[figure]{labelfont={bf},labelformat={default},labelsep=quad,name={Fig.}}
\usepackage{amsmath}
\usepackage{wrapfig}
\usepackage{graphicx}
\usepackage{subcaption}
\usepackage{amssymb}
\usepackage{natbib}

\setcounter{topnumber}{2}
\setcounter{bottomnumber}{2}
\setcounter{totalnumber}{4}

\usepackage{enumitem}
\usepackage[title]{appendix}
\begin{document}
\title{Mimicking the TYC strategy: Weak Allee effects, and a ``non" hyperbolic extinction boundary}	
\titlerunning{Mimicking the TYC strategy}        
\author{Eric M. Takyi \and Joydeb Bhattacharyya \and Matthew Beauregard \and Rana D. Parshad
}

\institute{Eric M. Takyi \at
	Department of Mathematics, Iowa State University, Ames, IA 50011, USA.
	\and
	Joydeb Bhattacharyya \at
	Department of Mathematics, Karimpur Pannadevi College, Nadia, WB 741152, India.
	\and
	Matthew Beauregard \at
	Department of Physics, Engineering, and Astronomy, Stephen F Austin State University, Nacogdoches, TX 75962, USA.
	\and
	Rana D. Parshad \at
	Department of Mathematics, Iowa State University, Ames, IA 50011, USA.}
	
\maketitle
\begin{abstract}
The Trojan Y Chromosome strategy (TYC)  is a genetic biocontrol strategy designed to alter the sex ratio of a target invasive population by reducing the number of females over time. Recently an alternative strategy is introduced, that mimics the TYC strategy by harvesting females whilst stocking males  \citep{L19} (FHMS). We consider the FHMS strategy, with a weak Allee effect,
and show that the extinction boundary need \emph{not} be hyperbolic. To the best of our knowledge, this is the first example of a non-hyperbolic extinction boundary in mating models, structured by sex.
Next, we consider the spatially explicit model and show that the weak Allee effect is \emph{both} sufficient and necessary for Turing patterns to occur. We discuss the applicability of our results to large scale biocontrol, as well as compare and contrast our results to the case with a strong Allee effect.
\keywords{Trojan Y Chromosome \and invasive species \and weak Allee effect \and extinction boundary \and harvesting \and Turing instability}
Correspondence to: rparshad@iastate.edu
\end{abstract}
\section{Introduction}

Invasive aquatic species are an imminent threat to marine biodiversity. The rate of invasions due to alien species worldwide continues to rise \citep{Haveletal}. Due to the various harmful effects of these species, their control is a paramount issue in ecology. Invasive species, upon effectively building up in another condition, can be hard to manage and the control expenses can get extreme. Gutierrez and Teem \citep{GutierrezTeem06} proposed an autocidal biocontrol TYC strategy to wipe out invasive species with $XX/XY$ sex chromosomes via a constant release of $YY$ males referred to as supermales. Exogenous sex hormones are utilized broadly to control sex in the aquaculture fishes. Male fish exposed to certain sex hormones can become feminized \citep{Set89}. Mating of a $XY$ phenotypic female fish and a wild-type $XY$ male fish produce supermale fish-bearing two $Y$ chromosomes. $YY$ supermales crossing to $XX$ females yield all $XY$ male offspring. The production of $YY$ broodstock of Nile tilapia \citep{Met97,V99}, yellow catfish \citep{Let13}, and brook trout \citep{Set16} has been proven successful by using TYC strategy. Further, for the production of  $YY$ supermales, supplying feminized $YY$ supermales into an undesired population was proposed by Gutierrez and Teem \citep{GutierrezTeem06}. There is a large literature on the TYC strategy

\citep{Parshad10,P11,Parshad13,TGP13,WP14,SDE2013,Z12,PG11,GP12}. Essentially,
\begin{itemize}
	\item For the classical four species TYC model, given any initial condition for the invasive wild-type males and females, there exist initial conditions for the introduced feminised supermales, and a threshold introduction rate, such that for an introduction greater than this rate, extinction of the wild-type occurs \citep{GP12,WP14,PG11}.

	\item The recent seminal work of Schill and collaborators \citep{Set16, Schill17, Schill18}, makes it evident that biocontrol of TYC type rests purely on the introduction of the supermale - \emph{not} feminized supermales, as these are still not in existence (and certainly not in mass production). Thus the three species TYC, with a one-time introduction of supermales, via the initial condition, is what occurs/is occurring in practice \citep{Schill18}.

	\item The three species TYC has been investigated in detail mathematically. The literature is rife with several results on well-posedness, and the long time dynamics of the system, under the assumptions of positive solutions (solutions that remain positive if they start from positive initial data) \citep{Parshad13,P11}. Essentially, here again, a sufficient introduction of the supermale can always yield extinction. However, the three species TYC model is now known to be ill-posed - and solutions to the female component can blow-up in finite time \citep{PBT19}, if the introduction of supermales is too large.

	\item In order to circumvent the issues of blow-up or ``unphysical" solutions, and due to the paucity of supermales, recent investigations into TYC type biocontrol have focused on (1) remodeling mating dynamics in TYC type models \citep{BPL20,JB20} and (2) investigation models that ``mimic" the TYC dynamics without using supermales such as using selective harvesting strategically \citep{L19, L18}.

	\item The consequences of a strong Allee effect on TYC type dynamics have also been recently investigated \citep{BPL20,JB20}. However, the impact of a \emph{weak} Allee effect on the population dynamics in case of a TYC type strategy and/or a FHMS strategy is adopted, has not been investigated.
\end{itemize}

Harvesting in practice, is tricky to say the least. Although it has been used in invasive species management along with chemical and biological control measures, and can result in non-random removals of individuals from targeted populations \citep{Bet2011,M2020}. The potential selectivity of these methods therefore has strong ecological and evolutionary implications. Consequently, we suggest that Palkovacs et al.’s \citep{Palkovacs2018} framework could be applied to invasive species management. Indeed, harvest-driven trait changes in invasive species might induce unexpected and potentially counterproductive results that may not have been explicitly considered by ecosystem managers. The work of Lyu \citep{L18,L19}, demonstrates the potential for harvest as an effective strategy that can mirror the TYC strategy. In theory linear harvesting (and harvesting at various density dependent rates), seem to work better than a mimic of TYC where males are stocked and females harvested (FHMS) \citep{L18,L19}. However, the impact of Allee effects on the overall success or failure of such a class of strategies remains unexplored.

Allee effects are positive relationships between individual fitness and population density. These can be strong, where there is a threshold, below which the population growth rate is negative. They could also be weak, where the growth rate is always positive, but smaller at lower densities \citep{Cet99,SS99}. In the context of marine fishes, researchers observe that an Allee effect is significant at very low population size and with bias in sex ratio \citep{PK17,W12}. Researchers \citep{Net18,PK17} observed that extinction of Atlantic cod ({\it Gadus morhua}) in the southern Gulf of St. Lawrence and the depletion of Atlantic herring ({\it Clupea harengus}) population in the North Sea are due to predation-driven Allee effect. For a population with male-biased sex ratio would lead to difficulty in finding a mate, even for species that use powerful sex pheromones. Such skewed sex ratios fortify Allee effects on account of mating failure, prompting the risk of populace extinction. In sex structured (into male and female) population models, specially in fishes, having a weak Allee effect only on the female is quite feasible, as at low female densities, we would expect smaller clutch sizes - however a few males could fertilize a large number of females - so clutch sizes could still be large \citep{Aet04}.

The dynamics of sex structured two species mating models, even with the inclusion of Allee effects, is generically like Fig. \ref{fig01} (a). There are typically two interior equilibria, one unstable (saddle) and one locally stable, also the extinction equilibrium is locally stable. The stable manifold of the saddle (separatrix) splits the phase space into two sections, delineated by the extinction boundary, also called the allee threshold or threshold manifold in the literature \citep{B02,SJ09}. If one picks initial data on one side of this curve, solutions tend to the stable interior equilibrium, and if one picks initial data on the other side of the boundary, solutions tend to the extinction equilibrium. Note, although the curve is seen to be of \emph{hyperbolic} shape (that is monotone with respect to initial conditions, in the phase space), the general shape of this curve, even in two species mating models, with or without Allee effects, is an open problem in ecology.

Also, by considering Allee effect in invasive fish population and a continued harvesting/stocking, the rarity of wild-type females would lead to difficulty in finding mates, and so the invasive fish population would eventually become locally extinct. However, it is not economically viable to harvest/stock indefinitely. Thus determining the time for terminating harvesting/stocking is critical as the wild-type invasive species would either go extinct or recover, such as in the absence of supermale invasive fish \citep{WP14}.

In the current manuscript we show that,
\begin{itemize}
	\item Both a saddle-node and Homoclinic bifurcation can occur in the FHMS model with weak Allee effect via Lemma \ref{lem:sn1} and see Fig. \ref{AfterHomoclinic}. We show limit cycle dynamics is not possible without the weak Allee effect in place, via Lemma \ref{lem:lc1}, however the weak Allee effect can lead to limit cycle dynamics, via Lemma \ref{lem:lc12}.

	\item The FHMS model with a weak Allee effect, can exhibit an extinction boundary (Allee threshold) that is non-hyperbolic. Such dynamics can enable extinction, essentially for any initial data. This is completely different when a weak Allee effect is \emph{not} in place. See Fig. \ref{fig:nwa}. A non-hyperbolic extinction boundary is also possible via a strong Allee effect, see Fig. \ref{fig:Bendings}, but the ``bending" of the boundary is not as pronounced as in the weak Allee effect case.

	\item We show when/if harvesting/stocking can be terminated at certain finite time, and when the population of invasive fish is below some threshold, to yield extinction. See Figs. \ref{fig04} - \ref{fig05}.

	\item We consider the spatially explicit FHMS model with weak Allee effect. We show that the weak Allee effect can cause Turing instability, and impossibility without it, via Lemma \ref{lem:nt1}, and Theorem \ref{thm:t1}.

	\item We discuss the implications of our results to biocontrol, via these strategies.
\end{itemize}

\section {Background}
Here we recap the basic TYC and FHMS models as presented in \citep{L18} and \citep{Parshad10}.
\subsection{The TYC Model}
In the TYC strategy, supermales ($S$) of the invasive species with two $YY$ Chromosomes, are introduced into the wild-type invasive fish population having wild-type males ($M$) and females ($F$). The rate of injection of supermale invasive fish is taken as $\mu_0$ (population time$^{-1}$). The reproduction rate of wild-type invasive fish species due to the interactions between male and female wild-type invasive fish species is $\beta_1$ (population$^{-1}$ time$^{-1}$), whereas the reproduction rate of wild-type invasive fish due to the interactions between wild-type female invasive fish and supermale invasive fish is $\beta_2$ (population$^{-1}$ time$^{-1}$). The death rates of wild-type and the supermale invasive fish are taken as $\delta_1$ (time$^{-1}$) and $\delta_2$ (time$^{-1}$) respectively. The carrying capacity of the system is $K_1$ (population), and the logistic term $L=1-\frac{F+M+S}{K_1}$ is used to constrain the invasive fish population. The equations describing the TYC model are:
\begin{eqnarray} {\label{eq:1}}\nonumber
  \frac{dF}{dT} &=& \frac{1}{2}\beta_1 F M L-\delta_1 F \nonumber\\
  \frac{dM}{dT} &=& F \left(\frac{1}{2}\beta_1M+\beta_2 S\right)L-\delta_1 M\\
  \frac{dS}{dT} &=& \mu_{0}-\delta_2 S, \nonumber
\end{eqnarray}
where $F(0)\geq 0$, $M(0)\geq 0$ and $S(0)\geq 0$.
\subsection{Existence and stability of equilibria when \texorpdfstring{$\mu_0=0$}{Lg}}
We refer the reader to \citep{WP14} for detailed analysis on the existence and stability of equilibria to system (\ref{eq:1}).
The equilibria to  system (\ref{eq:1}) after nondimensionalization are  $E_0=(0,0)$ and $E_{1,2}=(f^*_{\pm}, m ^*_{\pm})$  with $f^*=m^*$ where
$$ f^*_{\pm}  = \dfrac{1}{4} \Big( 1 \pm \sqrt{1-\Phi} \Big), \text{where} ~~~ \Phi=\dfrac{8}{\rho}.   $$

We recap some standard results on the model \citep{WP14},

\begin{theorem}.
If $\Phi <1$,
\begin{itemize}
\item[(i)] the extinction state $E_0$ is locally stable.\\
\item[(ii)] the equilibrium point $E_1$ is locally stable.\\
\item[(iii)] the equilibrium point $E_2$ is locally unstable.
\end{itemize}
\end{theorem}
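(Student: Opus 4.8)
This is a linearized-stability computation at three equilibria, so the plan is to assemble the relevant Jacobians and read off their spectra. First I would note that with $\mu_0=0$ the supermale equation $\dot S=-\delta_2 S$ decouples and drives $S\to 0$ exponentially, so the asymptotics live on the invariant, globally attracting plane $\{S=0\}$, on which the nondimensionalized model reduces to $\dot f=\rho f m(1-f-m)-f$, $\dot m=\rho f m(1-f-m)-m$ (with $\Phi=8/\rho$). Subtracting these gives $\frac{d}{dt}(f-m)=-(f-m)$, so $\{f=m\}$ is itself an attracting invariant line; this both forces $f^*=m^*$ at every nontrivial equilibrium and produces a ``universal'' eigenvalue $-1$ (eigenvector $(1,-1)$) at the interior equilibria.

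\emph{Step 1 ($E_0$).} Since $g(f,m):=\rho f m(1-f-m)$ is at least quadratic, $g_f=g_m=0$ at the origin, so $J(E_0)=\mathrm{diag}(-1,-1)$, with both eigenvalues $-1<0$; hence $E_0$ is locally asymptotically stable, which is (i). This needs no hypothesis on $\Phi$; the assumption $\Phi<1$ enters only to make $E_1,E_2$ exist as distinct positive equilibria, since $f^*_\pm=\tfrac{1}{4}(1\pm\sqrt{1-\Phi})$.

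\emph{Step 2 ($E_{1,2}$).} A short calculation at $f=m=f^*$ gives $g_f=g_m=a:=\rho f^*(1-3f^*)$, hence
\[
J\big|_{(f^*,f^*)}=\begin{pmatrix} a-1 & a\\ a & a-1\end{pmatrix},\qquad \lambda_1=2a-1,\quad \lambda_2=-1 .
\]
The one nontrivial point is the sign of $\lambda_1$; since $\lambda_2=-1$ this is the sign of $-\det J=2a-1$. I would feed in the equilibrium identity $\rho f^*(1-2f^*)=1$ (equivalently $2\rho (f^*)^2-\rho f^*+1=0$, whose roots are precisely $f^*_\pm$) to obtain the transparent form
\[
2a-1=\frac{2(1-3f^*)}{1-2f^*}-1=\frac{1-4f^*}{1-2f^*}.
\]
Because $\Phi<1$ forces $0<f^*_\pm<\tfrac{1}{2}$ the denominator is positive, while $1-4f^*_\pm=\mp\sqrt{1-\Phi}$. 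Therefore $\lambda_1<0$ at $E_1:=(f^*_+,m^*_+)$ (both eigenvalues negative, hence locally asymptotically stable, giving (ii)) and $\lambda_1>0$ at $E_2:=(f^*_-,m^*_-)$ (a saddle, hence unstable, giving (iii)).

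There is no real obstacle here: the algebra is routine, and the only two things to keep straight are (a) matching the labels $E_1,E_2$ to the $\pm$ roots via the sign of $1-4f^*$, and (b) transferring the planar conclusions back to the three-species model --- legitimate either because $\{S=0\}$ is globally exponentially attracting, or, more directly, because the $3\times 3$ Jacobian at each equilibrium is block triangular with the extra eigenvalue $-\delta_2<0$, so it changes nothing.
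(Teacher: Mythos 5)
Your proof is correct, and it is essentially the standard eigenvalue analysis that the paper itself defers to the literature (the theorem is only recapped from \citep{WP14} without a proof in this paper): reduce to the attracting plane $S=0$ (the extra eigenvalue $-\delta_2<0$ coming from the block-triangular structure), linearize at $f=m=f^*$, and use the equilibrium identity $\rho f^*(1-2f^*)=1$ to obtain the eigenvalues $-1$ and $\frac{1-4f^*}{1-2f^*}$, which is negative at $f^*_+$ (so $E_1$ is stable) and positive at $f^*_-$ (so $E_2$ is a saddle). All details check out, including the consistency of your reverse-engineered scaled system with the stated formula $f^*_\pm=\tfrac14\bigl(1\pm\sqrt{1-\Phi}\bigr)$, $\Phi=8/\rho$, and the correct observation that $E_0$ is stable irrespective of $\Phi$, the hypothesis $\Phi<1$ serving only to ensure existence and hyperbolicity of $E_{1,2}$.
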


\begin{remark}.
When $\rho=8,$  the two interior equilibrium points $E_1$ and $E_2$ collide with each other giving rise to a saddle-node bifurcation.
\end{remark}

\subsection {The FHMS Model}

A key issue in the implementation of the TYC strategy is the production of supermales. Estrogen-induced feminization of the wild-type male fish has often proven inefficient to obtain sex-reversed $XY$ physiological females. In such a situation, we can implement a sex-skewing strategy by removing a fraction of wild-type females by means of harvesting whilst adding in the wild-type males by means of stocking. This strategy is called female harvesting male stocking (FHMS), first proposed by Lyu \citep{L18,L19}.

For the FHMS model described below, the primary sex ratio in offspring is denoted by $r$ $(0<r<1)$. The harvesting rate of females and the stocking rate of the males are denoted by $h_F$ and $s_M$ respectively. Also, we assume that $0\le s_M<\delta$. The equations describing the FHMS system are:

\begin{eqnarray} {\label{eq:2}}\nonumber
  \frac{dF}{dT} &=& r\beta F M L-(\delta + h_F)F \nonumber\\
  \frac{dM}{dT} &=& (1-r)\beta F M L + (s_M-\delta) M,
\end{eqnarray}
where $F(0)\geq 0$ and $M(0)\geq 0$.

In order to reduce the number of parameters, we introduce dimensionless variables $$f=\frac{F}{K_1},\;m=\frac{M}{K_1},\; t=T\delta ,$$ and the dimensionless parameters $$\alpha=\frac{\beta K_1}{\delta},\; h=\frac{h_F}{\delta},\; s=\frac{s_M}{\delta}.$$
With these substitutions, the equations describing the system become:
\begin{eqnarray} {\label{eq:3}}\nonumber
  \frac{df}{dt} &=& r\alpha fm \left(1-f-m\right)- (1+h)f\equiv F_1(f,m) \nonumber\\
  \frac{dm}{dt} &=& (1-r)\alpha fm \left(1-f-m\right)+(s-1) m\equiv F_2(f,m),
\end{eqnarray}
where $f(0)\geq 0$ and $m(0)\geq 0$.

We recap the following results \citep{L18,L19},

\begin{lemma}.
If $f(0)$ and $m(0)$ are positive, then all possible solutions of the system \eqref{eq:3} are non-negative.
\end{lemma}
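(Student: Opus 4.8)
The plan is to exploit the multiplicative structure of the vector field in \eqref{eq:3}. First I would factor the right-hand sides as
\begin{eqnarray*}
\frac{df}{dt} &=& f\,\big[\,r\alpha m(1-f-m)-(1+h)\,\big]\;\equiv\; f\,G_1(f,m),\\
\frac{dm}{dt} &=& m\,\big[\,(1-r)\alpha f(1-f-m)+(s-1)\,\big]\;\equiv\; m\,G_2(f,m),
\end{eqnarray*}
where $G_1,G_2$ are polynomials, hence continuous (indeed $C^\infty$) on $\mathbb{R}^2$. Since the right-hand side of \eqref{eq:3} is polynomial it is locally Lipschitz, so by the Picard--Lindel\"of theorem there is a unique solution $(f(t),m(t))$ through the given initial data, defined on a maximal forward interval $[0,T_{\max})$; ``all possible solutions'' is therefore ``the solution''.

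Next, fix any $t\in[0,T_{\max})$. On the compact subinterval $[0,t]$ the solution is continuous, so its image is compact and the continuous functions $G_1,G_2$ are bounded along the trajectory; in particular $\tau\mapsto G_i(f(\tau),m(\tau))$ is integrable on $[0,t]$. Viewing the first equation as a scalar linear ODE in $f$ with the integrable time-dependent coefficient $G_1(f(\tau),m(\tau))$, the integrating-factor formula yields
\[
f(t)=f(0)\,\exp\!\Big(\int_0^t G_1\big(f(\tau),m(\tau)\big)\,d\tau\Big),
\]
and likewise
\[
m(t)=m(0)\,\exp\!\Big(\int_0^t G_2\big(f(\tau),m(\tau)\big)\,d\tau\Big).
\]
Because $f(0)>0$, $m(0)>0$ and the exponential factors are strictly positive, we get $f(t)>0$ and $m(t)>0$ for every $t\in[0,T_{\max})$; in particular the solution is non-negative, which is the assertion. (An equivalent route: the coordinate axes $\{f=0\}$ and $\{m=0\}$ are invariant, since $\dot f=0$ on $\{f=0\}$ and $\dot m=0$ on $\{m=0\}$, so by uniqueness a trajectory starting in the open first quadrant can never touch either axis.)

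I do not anticipate a genuine obstacle here: the only point that warrants a word of care is the finiteness of the integrals above, which follows from continuity of $G_1,G_2$ and of the solution on each compact subinterval of $[0,T_{\max})$. Note that this argument delivers positivity only on the maximal interval of existence; promoting it to a statement on all of $[0,\infty)$ would require an additional a priori bound ruling out finite-time blow-up, which is normally obtained separately from the logistic factor $1-f-m$ but is not needed for the non-negativity claim itself.
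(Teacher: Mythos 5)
Your proof is correct and follows essentially the same route as the paper: both rest on writing each component via an integrating factor as its (positive) initial value times an exponential of an integral of the factored coefficient, i.e. $f(t)=f(0)\exp\bigl(\int_0^t G_1\,d\tau\bigr)$ and similarly for $m$, so positivity of the data propagates forward (the paper states this lemma as a recap of \citep{L18,L19} and carries out exactly this exponential-representation argument for the analogous lemma for system \eqref{eq:5}). Your added remarks on the maximal interval of existence and on the invariance of the coordinate axes are sound refinements but do not change the substance of the argument.
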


\begin{lemma}.
All the solutions of the system \eqref{eq:3} are contained in some bounded subset in the plane $$\left\{(f,m)\in \textbf{R}^2: f\geq 0,\;\; m\geq 0\right\}.$$
\end{lemma}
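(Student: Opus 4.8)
The plan is to work in the closed first quadrant, which is forward invariant by the preceding non-negativity lemma, and to control the total population $N := f + m$. Adding the two equations of \eqref{eq:3} gives
\begin{equation*}
\frac{dN}{dt} = \alpha f m\,(1 - f - m) - (1+h) f + (s-1) m = \alpha f m\,(1 - N) - (1+h) f - (1 - s) m .
\end{equation*}
The first observation is that, for $(f,m)$ in the first quadrant, the last two terms are nonpositive since $h \ge 0$ and $s < 1$, while the birth term $\alpha f m\,(1-N)$ is nonpositive as soon as $N \ge 1$. Hence $\frac{dN}{dt} < 0$ whenever $N \ge 1$: the three contributions are each $\le 0$, and they vanish simultaneously only if $f = 0$ and $m = 0$, which is impossible when $N \ge 1$.

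From this I would deduce that the triangle $\Omega_c := \{(f,m) : f \ge 0,\ m \ge 0,\ f + m \le c\}$ is positively invariant for every $c \ge 1$: along a solution, at any first time $t^\ast$ with $N(t^\ast) = c$ one has $\frac{dN}{dt}(t^\ast) < 0$, so $N$ cannot cross the level $c$ from below. Consequently $N(t) \le \max\{N(0),\,1\}$ for all $t$ in the maximal interval of existence, and moreover $\limsup_{t\to\infty} N(t) \le 1$, so that $\Omega_{1+\varepsilon}$ is an absorbing set for every $\varepsilon>0$. In either reading, the whole trajectory is confined to the compact set $\Omega_{\max\{N(0),1\}}$, which is the asserted bounded subset of the first quadrant.

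It then remains only to rule out finite-time blow-up, so that the a priori bound holds on $[0,\infty)$. The right-hand side $(F_1,F_2)$ is polynomial, hence locally Lipschitz, so a unique local solution exists; by the standard continuation principle a solution can fail to be global only if $|(f(t),m(t))| \to \infty$ in finite time, which is impossible here since $0 \le f,\,m$ and $f+m \le \max\{N(0),1\}$ on every finite interval. Therefore the solution is global and stays in $\Omega_{\max\{N(0),1\}}$.

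I do not anticipate a real obstacle. The only points needing a little care are (i) verifying that the sole potentially positive piece of $\frac{dN}{dt}$, namely $\alpha f m(1-N)$, is genuinely switched off — not merely bounded — once $N \ge 1$; and (ii) turning the ``$N$ does not cross level $c$'' heuristic into a rigorous flow-invariance (Nagumo-type) argument, which is precisely where the strict negativity of $\frac{dN}{dt}$ on $\{N=c\}$ is used.
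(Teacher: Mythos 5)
Your argument is correct, but it is not the route the paper takes: for this lemma on system \eqref{eq:3} the paper offers no proof at all (the result is simply recapped from Lyu's work \citep{L18,L19}), and the closest in-paper argument is the boundedness proof for the weak-Allee system \eqref{eq:5} in Appendix B, which proceeds by contradiction on the female component alone --- defining a first time at which $f$ would reach or exceed $1$, computing $f'<0$ there, and ruling out the crossing --- and never explicitly bounds $m$. You instead add the two equations and track $N=f+m$: the birth terms combine into $\alpha f m(1-N)$, which is switched off once $N\geq 1$, while $-(1+h)f-(1-s)m\leq 0$ (using $h\geq 0$ and $0\leq s<1$) vanishes only at $f=m=0$, so $\frac{dN}{dt}<0$ on $\{N=c\}$ for every $c\geq 1$; the first-crossing (Nagumo-type) argument then makes each triangle $\{f\geq 0,\ m\geq 0,\ f+m\leq c\}$, $c\geq 1$, positively invariant, and together with the non-negativity lemma and the standard continuation principle this yields global existence with the explicit bound $f+m\leq\max\{N(0),1\}$. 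Your dissipativity approach buys more than the paper's componentwise one: it controls both populations simultaneously, produces an explicit invariant (indeed absorbing, granted the small extra compactness step giving uniform negativity of $\frac{dN}{dt}$ on $\{1+\varepsilon\leq N\leq N(0)\}$) region, and it transfers verbatim to system \eqref{eq:5}, where the appendix argument as written only controls $f$. The only hypotheses you use, $h\geq 0$ and $s<1$, are exactly those imposed in the model setup, so there is no gap.
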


\subsection {Equilibria and their stability}

The equilibria and stability analysis of system \eqref{eq:3} are presented in Appendix A.
We state some results on \eqref{eq:3}  that were not shown in \citep{L19},

\begin{lemma}.
For $h_*<h<h^*$, the invasive species get eliminated from the system \eqref{eq:3} via saddle-node bifurcation when $s$ is decreased through $s=s^*$.
\end{lemma}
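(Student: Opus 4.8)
The plan is to reduce the interior‑equilibrium problem for \eqref{eq:3} to a single quadratic and, following the equilibrium analysis of Appendix~A, to locate the fold at which its two positive roots coalesce. Imposing $F_1=F_2=0$ with $f,m>0$ forces $m=c\,f$ with $c=c(h,s):=\dfrac{(1-r)(1+h)}{r(1-s)}$, and back‑substitution gives $r\alpha c(1+c)\,f^{2}-r\alpha c\,f+(1+h)=0$, which has two distinct positive roots precisely when its discriminant $\Delta(h,s)$ is positive. Since $c$ is strictly increasing in $s$ and $\Delta$ changes sign exactly once, as $c$ passes through $c^{*}=\dfrac{4(1+h)}{r\alpha-4(1+h)}$, there is a unique threshold $s=s^{*}(h)$ at which the two interior equilibria merge: two for $s>s^{*}$, none for $s<s^{*}$. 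Solving $\Delta(h,s^{*})=0$ gives $s^{*}=1-\dfrac{(1-r)\,[\,r\alpha-4(1+h)\,]}{4r}$, and requiring $s^{*}$ to fall in the admissible stocking window $0<s^{*}<1$ is exactly what pins down the interval $h_{*}<h<h^{*}$ of the statement, with $h^{*}=\tfrac{r\alpha}{4}-1$ and $h_{*}=\tfrac{r\alpha}{4}-\tfrac{r}{1-r}-1$; one also checks the coalesced equilibrium $(f^{*},m^{*})=\big(\tfrac{1}{2(1+c^{*})},\tfrac{c^{*}}{2(1+c^{*})}\big)$ is biologically feasible, since $f^{*}+m^{*}=\tfrac12<1$.

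I would then confirm that this fold is a genuine saddle‑node bifurcation by checking the Sotomayor conditions at $(f^{*},m^{*})$ for $s=s^{*}$. A short computation shows the Jacobian $D\mathbf{F}$ of $\mathbf{F}=(F_1,F_2)$ has trace $-\alpha fm<0$ at \emph{any} interior equilibrium, so at the fold (where $\det D\mathbf{F}=0$) the zero eigenvalue is automatically simple and the other eigenvalue equals $-\alpha f^{*}m^{*}<0$ — a saddle is colliding with a stable node. It then remains to take right and left null vectors $v,w$ of $D\mathbf{F}(f^{*},m^{*})$ and verify the transversality condition $w^{\!\top}\partial_s\mathbf{F}(f^{*},m^{*},s^{*})\neq0$, which is immediate since $\partial_s\mathbf{F}=(0,m)^{\!\top}$ with $w_2\neq0$, and the quadratic non‑degeneracy $w^{\!\top}D^{2}\mathbf{F}(f^{*},m^{*},s^{*})(v,v)\neq0$, which is a finite explicit check at $(f^{*},m^{*})$. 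By Sotomayor's theorem, \eqref{eq:3} then undergoes a saddle‑node bifurcation as $s$ is decreased through $s^{*}$.

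Finally, to obtain elimination of the invasive species for $s<s^{*}$, I would couple the disappearance of the interior equilibria with a global argument. The Dulac function $B(f,m)=1/(fm)$ works on the open first quadrant, since $\partial_f(BF_1)+\partial_m(BF_2)=-r\alpha-(1-r)\alpha=-\alpha<0$; hence \eqref{eq:3} has no periodic orbits there, and together with dissipativity (the boundedness lemma recalled above) and the Poincar\'{e}--Bendixson theorem every positive semi‑orbit converges to an equilibrium. For $s<s^{*}$ the only equilibrium with $f,m\ge0$ is the extinction state $(0,0)$ — which is always a stable node, as $D\mathbf{F}(0,0)=\mathrm{diag}(-(1+h),\,s-1)$ has both diagonal entries negative — and there is no equilibrium on the open axes, since $\dot f<0$ on $\{m=0\}$ and $\dot m<0$ on $\{f=0\}$. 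Thus $(f(t),m(t))\to(0,0)$ for every positive initial datum, i.e.\ the invasive species is eliminated. The main obstacle I anticipate is not a single step but the bookkeeping behind the first two paragraphs — extracting $h_{*},h^{*}$ in closed form from $0<s^{*}<1$ while keeping the fold equilibrium feasible, and pushing the Sotomayor non‑degeneracy computation through cleanly; the global extinction step is short once the Dulac criterion is in hand.
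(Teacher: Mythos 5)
Your proposal is correct, and its core — checking Sotomayor's conditions at the coalescence point $f^*=\tfrac{1}{2(1+\mu)}$, using $\mathrm{Det}(J^*)=0$ together with $\mathrm{Tr}(J^*)=-\alpha f^*m^*<0$ to get a simple zero eigenvalue, then verifying transversality in $s$ and the quadratic non-degeneracy — is exactly the paper's argument. Where you differ is in scope: you derive the bifurcation threshold yourself by collapsing the interior-equilibrium conditions to the quadratic $r\alpha c(1+c)f^2-r\alpha c f+(1+h)=0$ and tracking its discriminant, recovering $s^*$, $h_*$, $h^*$ in closed form (your expressions agree with the paper's, e.g. your $h_*=\tfrac{r\alpha}{4}-\tfrac{r}{1-r}-1$ equals the paper's $\tfrac{r\alpha}{4}-\tfrac{1}{1-r}$), whereas the paper simply invokes its Appendix A existence lemma; and you append a global extinction argument (Dulac function $1/(fm)$, boundedness, Poincar\'e--Bendixson, and the fact that for $s<s^*$ the origin is the only nonnegative equilibrium and is a stable node), which the paper does not include in this proof — it proves only the saddle-node and leaves "elimination" implicit in the disappearance of the interior equilibria, with the no-limit-cycle Dulac computation stated separately as its own lemma. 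Your version therefore buys a more self-contained and genuinely stronger conclusion (all positive initial data converge to extinction past the fold), at the cost of more bookkeeping. Two small deferrals to note: you assert $w_2\neq 0$ for the left null vector without saying why (it follows at once from $J_{11}=-r\alpha f^*m^*\neq 0$; the paper's explicit $V=\bigl(1,\ \tfrac{r}{\mu(1-r)}\bigr)^T$ makes it visible), and you leave the non-degeneracy term as "a finite explicit check" where the paper actually evaluates it, obtaining $V^T\bigl[D^2F(E^*;s^*)(U,U)\bigr]=-r\alpha(1+\mu)\neq 0$; both computations do come out as needed, so these are omissions of detail rather than gaps in the argument.
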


\begin{proof}.
At $s=s^*$, we have $f^*=\frac{1}{2(1+\mu)}$ where $\mu=\frac{(1-r)(1+h)}{r(1-s)}$  and so $\mbox{\mbox{Det}}(J^*){_{|_{s=s^*}}}=0$, where $	J^*$ is the Jacobian of system \eqref{eq:3}. Since $\mbox{Tr}(J^*_i)<0$, it follows that $J^*_i$ has a simple eigenvalue at $s=s^*$.\\
Let $F(f,m;s)=\left(F_1\;\; F_2\right)^T$ and $V$ and $W$ are the eigenvectors corresponding to the zero eigenvalue for $J^*{_|{_{s=s^*}}}$ and $J^{*T}{_|{_{s=s^*}}}$ respectively.  Then we have $F_s(f,m;s)=\left(0\;\; \frac{\mu}{2(1+\mu)}\right)^T$, $U=\left(1 \;\; \mu\right)^T$ and $V=\left(1\;\; \frac{r}{\mu(1-r)}\right)^T$.
This gives $V^T F_{s}\left(E^*_i;s^*\right)=\frac{r}{2(1+\mu)(1-r)}\neq 0$ and $V^T\left[D^2F\left(E^*_i;s^*\right)(U,U)\right]=-r\alpha(1+\mu)\neq 0$. Therefore, by Sotomayor's theorem \citep{P13} it follows that the system \eqref{eq:3} undergoes a saddle-node bifurcation at $E^*_i$ when $s$ crosses $s^*$.
\end{proof}

\begin{lemma}.
\label{lem:lc1}
System (\ref{eq:3}) has no limit cycles.
\end{lemma}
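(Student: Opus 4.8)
The plan is to rule out periodic orbits via the Bendixson--Dulac criterion, exploiting the fact that the nonlinearities in \eqref{eq:3} are of (generalized) Lotka--Volterra type. First I would record the structural observation that both coordinate axes are invariant for \eqref{eq:3}: on $\{f=0\}$ we have $F_1=0$, and on $\{m=0\}$ we have $F_2=0$. Hence any closed orbit of the system, in particular any limit cycle, must lie entirely in the open first quadrant $Q=\{(f,m): f>0,\ m>0\}$, which is simply connected. This reduces the problem to excluding closed orbits in $Q$.

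Next I would introduce the Dulac function $B(f,m)=\dfrac{1}{fm}$, which is $C^1$ and strictly positive on $Q$. A direct computation gives
\begin{align*}
B\,F_1 &= r\alpha(1-f-m)-\frac{1+h}{m}, &
B\,F_2 &= (1-r)\alpha(1-f-m)+\frac{s-1}{f},
\end{align*}
so that
\begin{align*}
\frac{\partial (B\,F_1)}{\partial f}+\frac{\partial (B\,F_2)}{\partial m}
&= -r\alpha-(1-r)\alpha = -\alpha.
\end{align*}
Since $\alpha=\beta K_1/\delta>0$, this divergence is a strictly negative constant on all of $Q$; in particular it does not change sign there.

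Then I would invoke the Bendixson--Dulac theorem: if $\partial_f(BF_1)+\partial_m(BF_2)$ is not identically zero and does not change sign on a simply connected region, the system has no closed orbit lying entirely in that region. Applying this with the region $Q$ yields that \eqref{eq:3} has no periodic orbit in $Q$, and combined with the invariance of the axes, no periodic orbit anywhere; in particular, no limit cycles. I do not anticipate a genuine obstacle here --- the only real content is selecting the Dulac multiplier $1/(fm)$, which is the standard choice that linearizes the logistic/mating interaction term; everything else is a one-line computation and a citation to the Bendixson--Dulac theorem.
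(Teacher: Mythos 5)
Your proof is correct and follows essentially the same route as the paper: the Dulac multiplier $1/(fm)$ giving divergence $-\alpha<0$ and the Bendixson--Dulac criterion. Your added remarks on the invariance of the axes and the simple connectedness of the open first quadrant merely make explicit details the paper leaves implicit.
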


\begin{proof}.
We use the Dulac theorem to show the non-existence of limit cycles in (\ref{eq:3}). Let us consider the Dulac function
\begin{eqnarray} {\label{eq:6}}
\Psi(f,m)=\dfrac{1}{fm}
\end{eqnarray}
where $f\neq0 $ and $m \neq 0$.
Then,
\begin{eqnarray}\nonumber
\dfrac{\partial ( F^1 \Psi)}{\partial f} +\dfrac{\partial  (F^2 \Psi)}{\partial m}
 &=& \dfrac{\partial }{\partial f}\Big(r \alpha (1-f-m)-\dfrac{1+h}{m}  \Big) + \dfrac{\partial }{\partial m}\Big((1-r)\alpha (1-f-m) +\dfrac{s-1}{f} \Big) \nonumber \nonumber\\
&=& -r \alpha - (1-r)\alpha  \nonumber \\
&=& -\alpha < 0. \nonumber
\end{eqnarray}

Hence the system (\ref{eq:3}) has no limit cycles. This completes the proof.
\end{proof}

\begin{lemma}.
If $h=0=s$, then system (\ref{eq:3}) has no limit cycles.
\end{lemma}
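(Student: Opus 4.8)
The plan is to treat this as the \emph{zero-intervention} case, in which the system acquires extra structure: the planar flow collapses onto a one-dimensional invariant line, and a vector field all of whose orbits are asymptotic to such a line cannot support a closed orbit. Concretely, with $h=0=s$ the equations of \eqref{eq:3} read $\dot f = r\alpha f m(1-f-m)-f$ and $\dot m = (1-r)\alpha f m(1-f-m)-m$. First I would introduce the linear combination $w := (1-r)f - r m$ and compute $\dot w$. The nonlinear reproduction term carries the common factor $fm(1-f-m)$, weighted by $r$ in the first equation and by $1-r$ in the second, so it cancels identically under this combination, leaving $\dot w = -w$. Hence $w(t)=w(0)\,e^{-t}$ along every trajectory.

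From there the conclusion is immediate. Any periodic orbit $\gamma$ must lie in the open first quadrant, since the coordinate axes are invariant and, by uniqueness, an orbit meeting an axis stays on it, where the dynamics are one-dimensional. On $\gamma$, periodicity forces $w$ to be periodic; but $w(t)=w(0)e^{-t}$ is strictly monotone unless $w(0)=0$, so $\gamma$ must lie on the invariant line $\ell=\{(1-r)f = r m\}$. Restricting the flow to $\ell$ yields a single scalar ODE in $f$, which admits no nonconstant periodic solution. Therefore \eqref{eq:3} has no limit cycles when $h=0=s$. Alternatively, one may simply observe that Lemma~\ref{lem:lc1} already applies for all admissible parameters, in particular $h=0=s$; but the collapse argument above is the more informative statement, as it shows every orbit tends to $\ell$.

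There is essentially no technical obstacle: the only point of care is the algebraic cancellation that produces $\dot w=-w$ — i.e.\ choosing the combination with coefficients $(1-r,-r)$ rather than, say, $(1,-1)$ — and then handling orbits on the axes and on $\ell$ separately so that the one-dimensionality argument is legitimate. The content worth emphasizing is the biological interpretation: without harvesting and stocking, the densities are driven into the fixed ratio $m:f=(1-r):r$, after which the long-time behaviour is governed by a scalar equation, which is precisely why a weak Allee effect — breaking this rigid coupling — is what opens the door to oscillations in Lemma~\ref{lem:lc12}.
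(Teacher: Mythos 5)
Your proof is correct, but it takes a different route from the paper. The paper disposes of this case exactly as in Lemma~\ref{lem:lc1}: it applies Dulac's criterion with the same function $\Psi(f,m)=1/(fm)$, for which the divergence of $(\Psi F_1,\Psi F_2)$ is identically $-\alpha<0$, so no closed orbit can exist; indeed, as you note, the general Lemma~\ref{lem:lc1} already subsumes $h=0=s$, and the paper's separate proof is just that computation repeated with the two parameters set to zero. Your argument instead exploits structure that is special to the zero-intervention case: with $h=0=s$ the combination $w=(1-r)f-rm$ satisfies $\dot w=-w$ exactly (the reproduction terms cancel because they enter with weights $r$ and $1-r$, and the mortality terms have the common rate $1$), so $w$ is strictly monotone along any orbit off the invariant line $\ell=\{(1-r)f=rm\}$, which rules out periodicity there, while a closed orbit contained in $\ell$ is impossible since the restricted dynamics are a scalar autonomous ODE (or, even more simply, a nonconstant periodic orbit is a Jordan curve and cannot lie in a line). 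What each approach buys: the Dulac argument is shorter and uniform in $(h,s)$, which is why it also gives the stronger Lemma~\ref{lem:lc1}; your collapse argument is confined to $h=s=0$ (for general $h,s$ one gets $\dot w=-(1+h)(1-r)f+r(1-s)m$, not a multiple of $w$), but it is more informative, showing that every orbit is attracted to the fixed sex ratio $m:f=(1-r):r$ and that the long-time dynamics are effectively one-dimensional. One caveat on your closing interpretation: the cancellation producing $\dot w=-w$ persists for the weak Allee system \eqref{eq:5} when $h=s=0$ (the nonlinear terms still carry the weights $r$ and $1-r$ with a common factor), so it is the asymmetry introduced by nonzero harvesting and stocking, acting together with the Allee nonlinearity in Lemma~\ref{lem:lc12}, rather than the Allee effect alone, that destroys the collapse onto $\ell$ and permits oscillations.
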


\begin{proof}.
Suppose $h=0=s$ in  system (\ref{eq:3}). We use the Dulac theorem again and  consider the Dulac function in Eq.(\ref{eq:6}). Then,
\begin{equation}\nonumber
\dfrac{\partial ( F^1 \Psi)}{\partial f} +\dfrac{\partial  (F^2 \Psi)}{\partial m} \\
= \dfrac{\partial }{\partial f}\Big(r \alpha(1-f-m)-\dfrac{1}{m}  \Big) + \dfrac{\partial }{\partial m}\Big((1-r)\alpha (1-f-m) -\dfrac{1}{f} \Big) \\
=  -r \alpha - (1-r)\alpha = -\alpha\\
< 0.
\end{equation}

Hence the system (\ref{eq:3}) has no limit cycles for $h=0=s$. This completes the proof.
\end{proof}

\section{The FHMS Model with a weak Allee Effect}

We modify the system \eqref{eq:3} by considering Allee effect in the population. The equations describing the FHMS system with a weak Allee effect are given by:
\begin{eqnarray} {\label{eq:4}}\nonumber
  \frac{dF}{dT} &=& r\beta F^2 M L-(\delta + h_F)F \nonumber\\
  \frac{dM}{dT} &=& (1-r)\beta F^2 M L + (s_M-\delta) M,
\end{eqnarray}
where $F(0)\geq 0$ and $M(0)\geq 0$.\\

We introduce dimensionless variables $$f=\frac{F}{K_1},\;m=\frac{M}{K_1},\; t=T\delta ,$$ and the dimensionless parameters $$\alpha=\frac{\beta K^2_1}{\delta},\; h=\frac{h_F}{\delta},\; s=\frac{s_M}{\delta}.$$
With these substitutions, the equations describing the system become:
\begin{eqnarray} {\label{eq:5}}\nonumber
  \frac{df}{dt} &=& r\alpha f^2m \left(1-f-m\right)- (1+h)f\equiv \bar{F}_1(f,m) \nonumber\\
  \frac{dm}{dt} &=& (1-r)\alpha f^2m \left(1-f-m\right)+(s-1) m\equiv \bar{F}_2(f,m),
\end{eqnarray}
where $f(0)\geq 0$ and $m(0)\geq 0$.

\begin{figure}
\centering
\hspace*{-1cm}
\begin{tabular}{cc}
\includegraphics[scale=0.25]{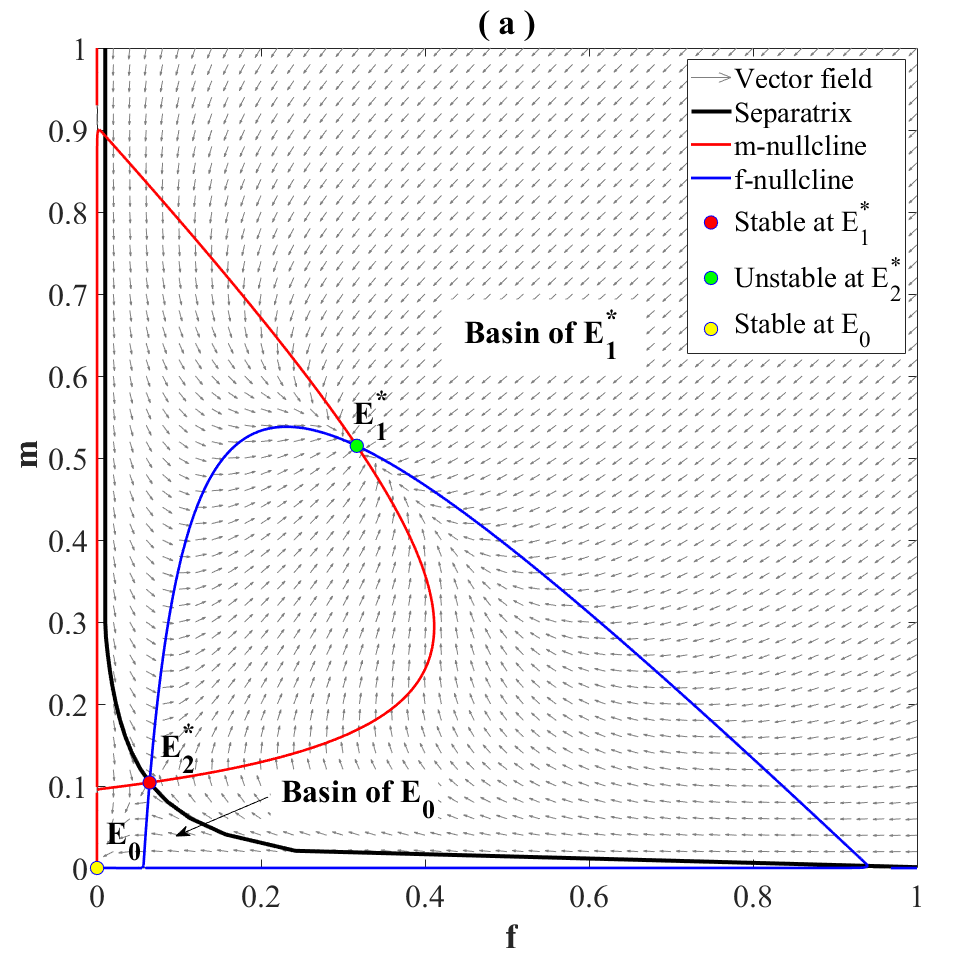} & \includegraphics[scale=0.25]{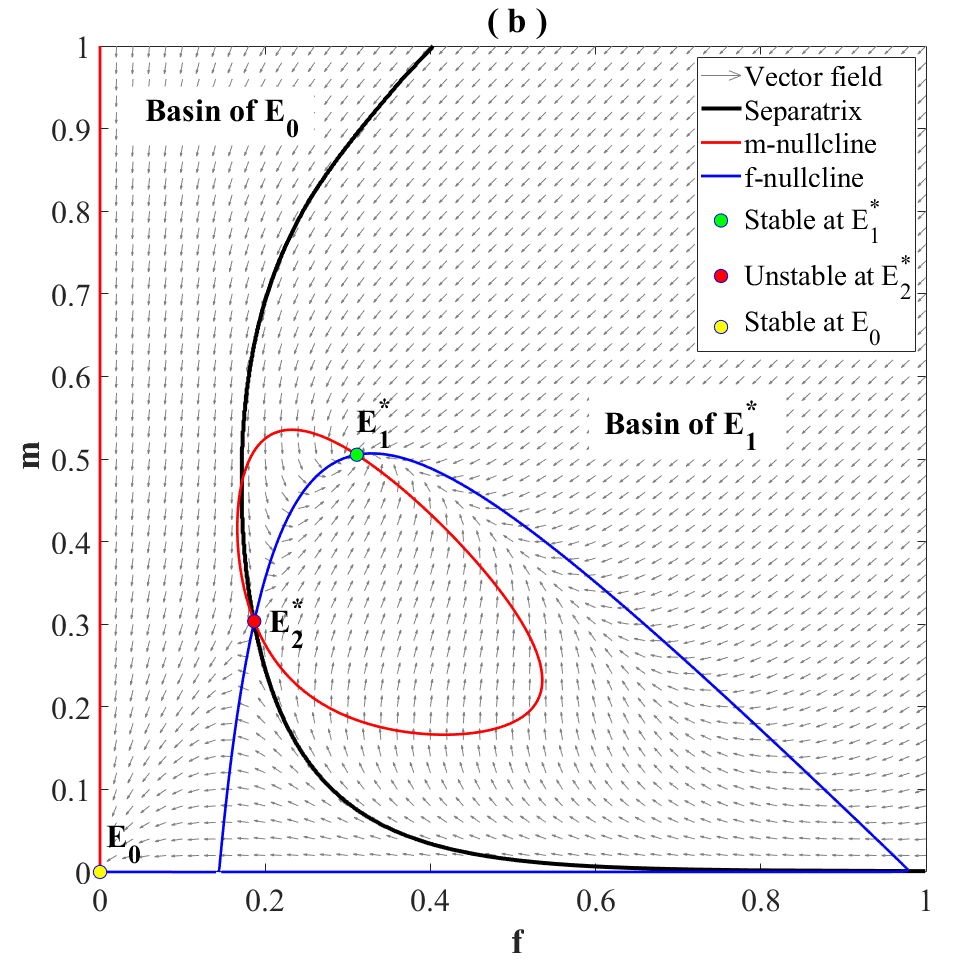}
\end{tabular}
\caption{The separatrix divides the phase plane into the regions of extinction and recovery for the $(a)$ FHMS system \eqref{eq:3} without Allee and $(b)$ FHMS system \eqref{eq:5} with weak Allee} \label{fig01}
\end{figure}

\begin{lemma}.
If $f(0)$ and $m(0)$ are positive, then all possible solutions of the system \eqref{eq:5} are non-negative.
\end{lemma}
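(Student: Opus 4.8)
The plan is to run the standard positivity argument for Kolmogorov-type systems, precisely as in the analogous statement already recalled for \eqref{eq:3}. The right-hand sides $\bar{F}_1$ and $\bar{F}_2$ are polynomials in $(f,m)$, hence locally Lipschitz, so the Picard--Lindel\"of theorem gives a unique solution on a maximal interval of existence $[0,t_{\max})$. The key structural observation is that each equation of \eqref{eq:5} carries the corresponding state variable as a factor:
\begin{align*}
\frac{df}{dt} &= f\Big(r\alpha f m(1-f-m)-(1+h)\Big), &
\frac{dm}{dt} &= m\Big((1-r)\alpha f^2(1-f-m)+(s-1)\Big).
\end{align*}

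From this I would integrate along a trajectory to obtain the representation
\[
f(t)=f(0)\exp\!\left(\int_0^t\big[r\alpha f(\tau)m(\tau)(1-f(\tau)-m(\tau))-(1+h)\big]\,d\tau\right),
\]
together with the analogous formula for $m(t)$ whose exponent is $\int_0^t\big[(1-r)\alpha f(\tau)^2(1-f(\tau)-m(\tau))+(s-1)\big]\,d\tau$. Since the integrands are continuous, hence bounded, on every compact subinterval of $[0,t_{\max})$, both exponentials are finite and strictly positive; therefore $f(0)>0$ and $m(0)>0$ force $f(t)>0$ and $m(t)>0$ for all $t\in[0,t_{\max})$, and in particular the solutions remain non-negative. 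Equivalently, one may simply note that the coordinate axes $\{f=0\}$ and $\{m=0\}$ are invariant because $\bar{F}_1(0,m)=0$ and $\bar{F}_2(f,0)=0$, so by uniqueness a trajectory issuing from the open first quadrant cannot reach either axis in finite time.

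I do not anticipate any genuine obstacle here: the extra power of $f$ relative to \eqref{eq:3} is immaterial to the argument. The only point worth stating carefully is that the conclusion holds on the maximal interval of existence; global existence (so that $t_{\max}=\infty$) then follows separately from a boundedness estimate for \eqref{eq:5}, the weak-Allee analogue of the boundedness lemma recalled for \eqref{eq:3}.
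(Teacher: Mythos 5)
Your proposal is correct and follows essentially the same route as the paper: the paper also writes $f(t)=f(0)\exp\bigl(\int_0^t\phi_1\,d\tau\bigr)$ and the analogous exponential representation for $m(t)$ with $\phi_1=r\alpha fm(1-f-m)-(1+h)$ and $\phi_2=(1-r)\alpha f^2(1-f-m)+(s-1)$, concluding invariance of the first quadrant. Your additional remarks (local Lipschitz continuity, maximal interval, invariance of the axes via uniqueness) only make the same argument more explicit.
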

\begin{proof}.
 We have $f(t)=f(0)e^{\displaystyle \int_0^t\phi_1(f,m;\tau)d\tau}$ and $m(t)> m(0)e^{\displaystyle \int_0^t\phi_2(f,m;\tau)d\tau}$, where \\
$r \left\{m\phi_2(f,m;t)+1-s\right\}=(1-r) \left\{f\phi_1(f,m;t)+1+h\right\}$. Here $\phi_1= r\alpha fm \left(1-f-m\right)- (1+h)$ and $\phi_2=(1-r)\alpha f^2 \left(1-f-m\right)+(s-1). $\\
This implies, all solutions of \eqref{eq:5} remain within $\left\{(f,m)\in \textbf{R}^2: f\geq 0, m\geq 0\right\}$ starting from an interior point of it. Therefore, $\textbf{R}^2_+=\left\{(f,m)\in \textbf{R}^2: f> 0, m> 0\right\}$ is an invariant region, and as long as $f(t)>0$ and $m(t)>0$ for all $t$, the local existence and uniqueness properties hold in $\textbf{R}^2_+$. We now prove that the solutions of \eqref{eq:5} with initial values in $\textbf{R}^2_+$ are bounded, so that the system \eqref{eq:5} is biologically meaningful.
\end{proof}
\begin{lemma}.
All the solutions of the system \eqref{eq:5} are contained in some bounded subset in the plane $$\left\{(f,m)\in \textbf{R}^2: f\geq 0,\;\; m\geq 0\right\}.$$
\end{lemma}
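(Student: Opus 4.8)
The plan is to control the total population $W(t):=f(t)+m(t)$, which dominates each component because $f,m\ge 0$ by the preceding lemma. First I would add the two equations of \eqref{eq:5}, so that
\begin{equation}\nonumber
\dot W \;=\; \bar{F}_1(f,m)+\bar{F}_2(f,m) \;=\; \alpha f^2 m\,(1-f-m)\;-\;(1+h)f\;+\;(s-1)m .
\end{equation}
Here I use the structural assumptions of the model: $h=h_F/\delta\ge 0$, hence $1+h>0$; and $0\le s_M<\delta$, hence $s=s_M/\delta\in[0,1)$ and $s-1<0$; and the phase region is $\{f\ge 0,\,m\ge 0\}$.

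The key step is a sign argument on the logistic factor. Whenever $f+m\ge 1$ we have $1-f-m\le 0$ while $f^2 m\ge 0$, so the first term of $\dot W$ is non-positive; together with $-(1+h)f\le 0$ and $(s-1)m\le 0$ this yields $\dot W\le 0$, and in fact $\dot W<0$ unless $f=m=0$, which is impossible on $\{f+m\ge 1\}$. In particular, on the line $f+m=1$ one has $\dot W=-(1+h)f+(s-1)m<0$, so at a hypothetical first outward crossing $\dot W$ would be $\ge 0$, a contradiction; hence the compact set $B_1:=\{(f,m):f\ge 0,\,m\ge 0,\,f+m\le 1\}$ is forward invariant. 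For a trajectory with $f(0)+m(0)>1$, the inequality $\dot W\le 0$ on $\{W\ge 1\}$ shows $W$ is non-increasing as long as $W\ge 1$, and once it reaches $1$ it is trapped in $B_1$. Either way $W(t)\le \max\{1,\,f(0)+m(0)\}$ on the maximal interval of existence, so the solution stays in the compact set $\{(f,m):f\ge 0,\,m\ge 0,\,f+m\le \max\{1,f(0)+m(0)\}\}$, which proves the lemma; boundedness together with the polynomial (hence locally Lipschitz) vector field also gives global existence forward in time.

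If one wants the sharper conclusion that every orbit is eventually trapped in the single compact set $B_1$, I would rule out an orbit remaining in $\{f+m>1\}$ forever: there $W$ decreases monotonically to some $L\ge 1$, and since $\dot W$ is bounded and Lipschitz along the (bounded) orbit, Barbalat's lemma forces $\dot W\to 0$; as each of the terms $\alpha f^2 m(1-L)$, $-(1+h)f$, $(s-1)m$ then tends to $0$, we get $f,m\to 0$, contradicting $f+m\to L\ge 1$. Thus $\limsup_{t\to\infty}\big(f(t)+m(t)\big)\le 1$, and $B_1$ is a compact absorbing set.

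I do not expect a serious obstacle: this is the standard ``total-population'' a priori estimate. The only points requiring care are (i) verifying $\dot W<0$ on the boundary $f+m=1$, so that $B_1$ is genuinely invariant and not merely attracting from outside — this is exactly where $1+h>0$ and $s<1$ are used — and (ii) the correct invocation of Barbalat's lemma in the optional refinement; everything else is bookkeeping.
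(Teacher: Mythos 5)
Your argument is correct, but it is a genuinely different route from the paper's. The paper argues component-wise: it shows by a first-crossing-time contradiction that $f(t)$ can never rise to $1$ if $f(0)<1$, and can never exceed $f(0)$ if $f(0)\ge 1$, and then concludes boundedness of the solution in the first quadrant (notably, it never writes down an explicit bound for $m$). You instead work with the total population $W=f+m$: adding the equations (using $r+(1-r)=1$) and observing that on $\{f+m\ge 1\}$ the cubic term is non-positive while $-(1+h)f+(s-1)m<0$, you get strict decrease of $W$ there, hence forward invariance of the simplex $\{f\ge 0,\,m\ge 0,\,f+m\le 1\}$ and the per-trajectory bound $W(t)\le\max\{1,\,f(0)+m(0)\}$. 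What your approach buys is that it controls both components simultaneously (so it actually delivers the full statement of the lemma in one stroke), it yields a compact absorbing set, and it makes the roles of $1+h>0$ and $s<1$ transparent; the paper's approach is more elementary pointwise reasoning on the $f$-equation alone and avoids any Lyapunov-type bookkeeping, at the cost of handling cases and leaving the $m$-bound implicit. Two small remarks: in your optional refinement you do not really need Barbalat, since on $\{W\ge 1\}$ one has $\dot W\le -(1+h)f-(1-s)m\le -\min\{1+h,\,1-s\}\,W\le -\min\{1+h,\,1-s\}$, which forces exit from that region in finite time; and the term you write as $\alpha f^2m(1-L)$ should be $\alpha f^2m(1-f-m)$, though this does not affect the conclusion since $f+m\to L$ along the hypothetical orbit.
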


The proof is shown in section \ref{AppB}.

\subsection {Equilibria and their stability}
System \eqref{eq:5} has the nullclines $\bar{F}_i=0$ $(i=1,2)$. Solving these nullclines yields the following equilibria:
\begin{itemize}
\item[(i)] Invasive fish-free equilibrium $E_0=\left(0,0\right)$ exists always and is locally asymptotically stable (as $s<1$).

\item[(ii)] coexistence equilibria $E^*_{i}=\left(f^*_i,\mu f^*_i\right)$, where $\mu=\frac{(1-r)(1+h)}{r(1-s)}$ and $f^*_i$ is a positive root of the equation
$$G(f)\equiv f^3-\frac{1}{1+\mu}f^2+\frac{1+h}{r\alpha\mu(1+\mu)}=0.$$
\end{itemize}

The stability of system \eqref{eq:5} is determined by using eigenvalue analysis of the Jacobian matrix evaluated at the appropriate equilibrium. The eigenvalues of the Jacobian matrix $(J(E_0))$ of the system \eqref{eq:5} at $E_0$ are $s-1$ and $-(1+h)$. Since $0\le s<1$, all the eigenvalues of $J(E_0)$ are negative. This gives the following lemma:
\begin{lemma}.
The invasive fish-free equilibrium $E_0$ is always locally asymptotically stable.
\end{lemma}
The Jacobian of the system \eqref{eq:5} evaluated at $E^*_i$ is given by
$$J^*_i=\begin{pmatrix}
  r\alpha \mu f^{*2}_i\left(1-2f^*_i-\mu f^*_i\right) & r\alpha f^*_i(1-f^*_i-2\mu f^*_i)\\
  (1-r)\alpha \mu f^{*2}_i(2-3f^*_i-2\mu f^*_i) & -(1-r)\alpha \mu f^{*3}_i
\end{pmatrix}.$$
The system \eqref{eq:5} is locally asymptotically stable at $E^*_i$ if and only if $\mbox{Tr}(J^*)<0$ and $\mbox{Det}(J^*_i)>0$. \\

From Fig. \ref{fig03} it is observed that as the harvesting rate ($h$) is increased, the female-male gender ratio drops significantly. Once $h$ crosses some critical threshold value, there leads to a sudden change of transition from stable coexistence state to invasive fish-free state (cf. Fig. \ref{fig02}). Therefore, it is necessary to study the behaviour of the system \eqref{eq:5} by considering $h$ as a bifurcation parameter. \\

\begin{figure}
\centering
\hspace*{-1.5cm}
\begin{tabular}{cc}
 \includegraphics[scale=0.25]{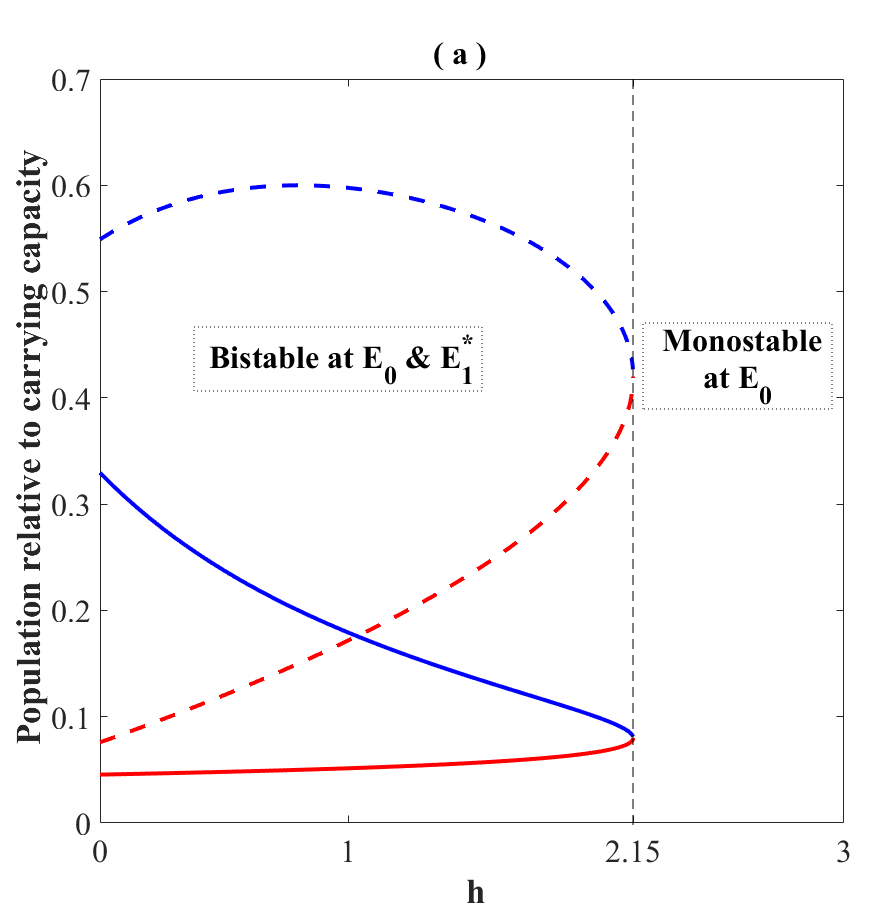}& 
 \includegraphics[scale=0.25]{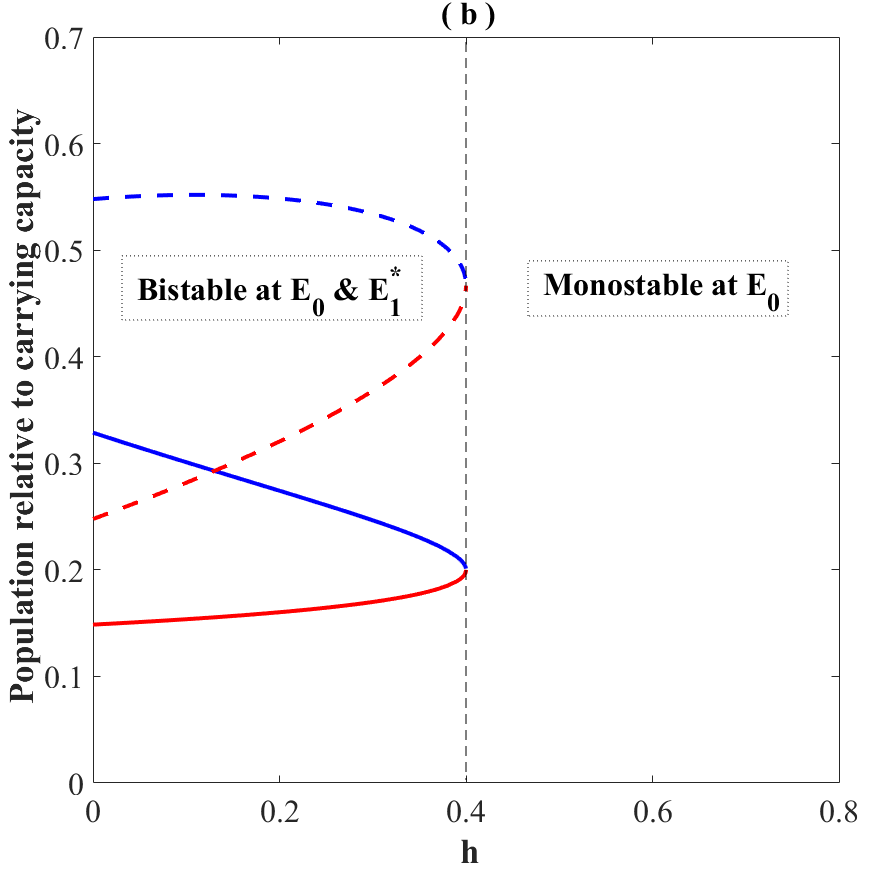}
 \end{tabular}
\caption{Bifurcation diagrams of the $(a)$ FHMS model without Allee effect \eqref{eq:3} and $(a)$ FHMS model with weak Allee effect \eqref{eq:5}, where $h$ as the bifurcation parameter (XX and XY are in solid and dotted curves respectively). The stable and unstable equilibrium are represented by the blue curve and the red curve, respectively }\label{fig02}
\end{figure}

\begin{lemma}.
\label{lem:sn1}
The invasive species get eliminated from the system \eqref{eq:5} via saddle-node bifurcation when $h$ is increased through $h=h^*$.
\end{lemma}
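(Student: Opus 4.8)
The plan is to transcribe the method of the earlier $s$-parametrized fold, now taking the harvesting rate $h$ as the bifurcation parameter and invoking Sotomayor's theorem \citep{P13}. First I would identify $h^*$. Since the coexistence equilibria are $E^*_i=(f^*_i,\mu f^*_i)$ with $\mu=\mu(h)=\frac{(1-r)(1+h)}{r(1-s)}$ and $f^*_i$ a positive root of $G(f)=f^3-\frac{1}{1+\mu}f^2+\frac{1+h}{r\alpha\mu(1+\mu)}$, a fold happens exactly when $G$ acquires a positive double root, i.e. $G(f^*)=G'(f^*)=0$. From $G'(f)=f\bigl(3f-\tfrac{2}{1+\mu}\bigr)$ the double root is $f^*=\frac{2}{3(1+\mu)}$ (and $1-f^*-\mu f^*=\tfrac13>0$, so this state is feasible), and plugging it into $G(f^*)=0$ together with $\mu=\mu(h)$ reduces to $(1+\mu)^2=\frac{4\alpha(1-r)}{27(1-s)}$, which determines $\mu$ — hence $h^*$ — explicitly; in particular a fold can occur only when $\alpha(1-r)>\tfrac{27}{4}(1-s)$. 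One also checks that $G$ has two positive roots just below $h^*$ and none just above, matching Fig. \ref{fig02}(b).

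Next I would confirm that $J^*_i$ has a simple zero eigenvalue at $h=h^*$. The condition $\mbox{Det}(J^*_i)=0$ there follows because the nullclines $\bar F_1=0$ and $\bar F_2=0$ meet transversally precisely when $G'(f^*_i)\neq0$, so a double root of $G$ forces a tangency and kills the determinant (this is also checked directly by substituting $f^*=\frac{2}{3(1+\mu)}$ into the entries of $J^*_i$). For the trace the same substitution collapses $\mbox{Tr}(J^*_i)$ to $\frac{\alpha\mu f^{*2}}{3(1+\mu)}\bigl(r(1+\mu)-2\bigr)$, which is negative in the parameter regime of interest, so the remaining eigenvalue is nonzero and the zero eigenvalue simple. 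I would then read off the right null vector $U$ (with $J^*_iU=0$) and the left null vector $W$ (with $(J^*_i)^TW=0$) from the entries of $J^*_i$ at $E^*_i$; both have all components nonzero, since the off-diagonal entries of $J^*_i$ do not vanish there.

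With $U,W$ in hand, Sotomayor's theorem asks for two nondegeneracy conditions. Writing $\bar F=(\bar F_1,\bar F_2)^T$, differentiation in the parameter gives $\bar F_h(E^*_i;h^*)=(-f^*,0)^T$, hence $W^T\bar F_h(E^*_i;h^*)=-f^*W_1\neq0$. The remaining quantity is $W^T\bigl[D^2\bar F(E^*_i;h^*)(U,U)\bigr]$: here the Hessians $D^2\bar F_1,D^2\bar F_2$ are built from the second partials of the quartic kinetic terms $r\alpha f^2m(1-f-m)$ and $(1-r)\alpha f^2m(1-f-m)$, still cubic after two derivatives, so one assembles the full bilinear form and then reduces it using $f^*=\frac{2}{3(1+\mu)}$ with the equilibrium identities $r\alpha f^*m^*(1-f^*-m^*)=1+h$ and $(1-r)\alpha f^{*2}(1-f^*-m^*)=1-s$, finally reading off that it is nonzero. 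Granting this, Sotomayor's theorem gives a saddle-node bifurcation of $E^*_i$ as $h$ passes $h^*$. Since no coexistence equilibrium survives for $h>h^*$, an index argument rules out periodic orbits (a cycle in the open first quadrant would have to enclose equilibria of total index $+1$, but the only equilibrium $E_0$ lies at the corner), so by the boundedness (absorbing-set) lemma for \eqref{eq:5} and Poincar\'e--Bendixson every trajectory converges to the globally attracting $E_0$: the invasive species is eliminated.

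The step I expect to be the genuine obstacle is the second Sotomayor quantity $W^T[D^2\bar F(U,U)]$. Because the kinetics are quartic rather than cubic, none of the Hessian entries vanish, so this is a heavy bilinear computation, and the delicate part is collapsing it to a manifestly nonzero closed form (and, relatedly, pinning the \emph{sign} of $\mbox{Tr}(J^*_i)$ rather than just its non-vanishing) using the identities above. Everything else is a routine re-run of the argument already used for the $s$-parametrized fold.
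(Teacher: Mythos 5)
Your proposal follows essentially the same route as the paper: locate the instantaneous equilibrium at the double root $f^*=\frac{2}{3(1+\mu)}$ of $G$, show $\mbox{Det}(J^*)=0$ with $\mbox{Tr}(J^*)\neq 0$ so the zero eigenvalue is simple, compute $\bar F_h(E^*;h^*)=\left(-\frac{2}{3(1+\mu)},\,0\right)^T$, and invoke Sotomayor's theorem \citep{P13}. Where you differ is mostly in your favor: you extract $h^*$ in closed form from $(1+\mu)^2=\frac{4\alpha(1-r)}{27(1-s)}$ and collapse the trace to $\frac{\alpha\mu f^{*2}}{3(1+\mu)}\left(r(1+\mu)-2\right)$, whereas the paper leaves $h^*$ implicit in $\mbox{Det}(J^*)=0$ and fixes it numerically ($h^*=0.4$ for the simulation parameters); and you append a genuine proof of the lemma's actual conclusion (elimination for $h>h^*$) via nonexistence of interior equilibria, an index argument excluding cycles, boundedness, and Poincar\'e--Bendixson, which the paper does not argue at all. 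The one step you flag as the obstacle, $W^T\left[D^2\bar F(E^*;h^*)(U,U)\right]\neq 0$, is exactly the step the paper also declines to do analytically and instead verifies numerically (obtaining $-29.997$ at the chosen parameters); as written you merely ``grant'' it, so to be on par with the paper you should at least carry out the same numerical check, or complete the closed-form reduction you sketch. One small slip: your remark that both null vectors have all components nonzero because the off-diagonal Jacobian entries do not vanish can fail, since $J_{12}\propto 1-f^*-2\mu f^*=\frac{1-\mu}{3(1+\mu)}$ vanishes at $\mu=1$; this is harmless for the first transversality condition because $J_{22}\neq 0$ forces $W_1\neq 0$, which is all you use.
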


The proof is given in the appendix section \ref{AppB1}.

\begin{lemma}.
\label{lem:lc12}
Consider the  Jacobian of system (\ref{eq:5}). If the following hold
 \begin{itemize}
	\item[(i)] $\mbox{Tr}(J_i^*)\big |_{E_1^*}=0$,\\
        \item[(ii)]$\text{\mbox{Det}}(J_i^*)\big |_{E_1^*}  >0,$ \\
 \item[(iii)] $\dfrac{d}{ds}(\text{\mbox{Tr}}(J_i^*))\big |_{E_1^*} \neq 0  $ at $s=s_{\text{hf}}$,
 \end{itemize}

then the system \eqref{eq:5} exhibits periodic oscillation via a Hopf bifurcation when $s$ is increased through $s=s^*$.
\end{lemma}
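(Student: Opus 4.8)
The plan is to verify that hypotheses (i)--(iii) are precisely the conditions required by the planar Poincar\'e--Andronov--Hopf bifurcation theorem (see e.g.\ \citep{P13}, or Guckenheimer--Holmes), applied with $s$ as the bifurcation parameter at the equilibrium $E_1^*$. First I would record that, by the implicit function theorem applied to $G(f)=0$, the branch $f_1^*(s)$ --- and hence $E_1^*(s)=(f_1^*(s),\mu f_1^*(s))$ --- depends smoothly on $s$ in a neighbourhood of the Hopf value $s_{\text{hf}}$, since the relevant root $f_1^*$ is simple there; simplicity is guaranteed because $\mbox{Det}(J_i^*)\neq 0$ by (ii). Consequently $J_i^*(s)$ is a smooth matrix-valued function of $s$ near $s_{\text{hf}}$.

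Next, since \eqref{eq:5} is two-dimensional, the eigenvalues of $J_i^*(s)$ are $\lambda_{\pm}(s)=\tfrac12\mbox{Tr}(J_i^*(s))\pm\tfrac12\sqrt{\mbox{Tr}(J_i^*(s))^2-4\,\mbox{Det}(J_i^*(s))}$. Evaluating at $s=s_{\text{hf}}$, hypothesis (i) gives $\mbox{Tr}(J_i^*)\big|_{E_1^*}=0$ and hypothesis (ii) gives $\mbox{Det}(J_i^*)\big|_{E_1^*}>0$, so $\lambda_{\pm}(s_{\text{hf}})=\pm i\omega_0$ with $\omega_0=\sqrt{\mbox{Det}(J_i^*)\big|_{E_1^*}}>0$: the linearization at $E_1^*$ is a non-hyperbolic centre. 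By continuity of the trace and determinant, for $s$ in a punctured neighbourhood of $s_{\text{hf}}$ the eigenvalues remain a complex-conjugate pair $\alpha(s)\pm i\omega(s)$ with $\alpha(s)=\tfrac12\mbox{Tr}(J_i^*(s))$ and $\omega(s_{\text{hf}})=\omega_0\neq 0$. The eigenvalue-crossing (transversality) condition $\alpha'(s_{\text{hf}})=\tfrac12\,\dfrac{d}{ds}\big(\mbox{Tr}(J_i^*)\big)\big|_{E_1^*}\neq 0$ is then exactly hypothesis (iii). Invoking the Hopf bifurcation theorem, a one-parameter family of periodic orbits bifurcates from $E_1^*$ as $s$ passes through $s_{\text{hf}}$, which yields the claimed periodic oscillation.

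The one genuinely nontrivial step --- and the main obstacle to a fully quantitative statement --- is the nondegeneracy of the first Lyapunov coefficient $\ell_1$ at $(E_1^*,s_{\text{hf}})$, which governs whether the bifurcation is super- or sub-critical and hence the stability of the emerging cycle. To pin this down I would translate $E_1^*$ to the origin, apply a linear change of coordinates bringing $J_i^*(s_{\text{hf}})$ into the real canonical form $\begin{pmatrix} 0 & -\omega_0\\ \omega_0 & 0 \end{pmatrix}$, and evaluate the standard formula for $\ell_1$ in terms of the second- and third-order partial derivatives of $\bar F_1,\bar F_2$ (the Guckenheimer--Holmes coefficient). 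Because $\bar F_1,\bar F_2$ are cubic polynomials these derivatives are explicit but algebraically heavy; one may either carry this out symbolically to exhibit $\ell_1\neq 0$ on the relevant parameter range, or --- since the lemma only asserts \emph{existence} of periodic oscillation --- simply note that $\ell_1\neq 0$ generically, so the Hopf theorem applies and a limit cycle is born on one side of $s_{\text{hf}}$. The remaining steps (smooth persistence of $E_1^*$, the eigenvalue computation, and transversality) are routine.
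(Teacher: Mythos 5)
Your argument is correct, but it takes a different route from the paper. You treat the lemma as what it literally is -- a conditional statement whose hypotheses (i)--(iii) are the eigenvalue and transversality conditions of the planar Poincar\'e--Andronov--Hopf theorem -- and you verify abstractly that (i)+(ii) give a simple purely imaginary pair $\pm i\sqrt{\mbox{Det}(J_i^*)}$, that (iii) is the crossing condition, and that the branch $E_1^*(s)$ persists smoothly (your remark that $\mbox{Det}(J_i^*)\neq 0$ forces $f_1^*$ to be a simple root is consistent with the paper's saddle-node analysis, where the double root corresponds exactly to $\mbox{Det}(J^*)=0$). The paper instead proves the lemma by concrete verification: it fixes $r=0.5,\ \alpha=90,\ h=0.24,\ s=0.59317665$, computes $E_1^*$ and the Jacobian numerically, checks trace $=0$ and determinant $>0$, derives an explicit closed-form expression for $\tfrac{d}{ds}\mbox{Tr}(J_i^*)$ and evaluates it ($\approx -4.64\neq 0$), and then tracks the eigenvalues across $s=0.59316$ to $s=0.59318$ to identify the bifurcation as subcritical with an unstable limit cycle (consistent with Fig.~\ref{AfterHomoclinic}). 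Your approach buys generality and makes transparent that the lemma is just the Hopf theorem in disguise; the paper's approach buys a concrete value $s_{hf}$, an explicit transversality formula, and the sub/supercriticality determination -- which is precisely the point you correctly flag as the nontrivial gap in a purely abstract treatment, since neither you nor the paper computes the first Lyapunov coefficient analytically (the paper infers subcriticality from the numerically observed unstable cycle, while you defer to symbolic computation or genericity). If you want your version to match the strength of what the paper actually claims (subcritical Hopf, unstable limit cycle), you would need to carry out that Lyapunov-coefficient computation or cite the numerical evidence as the paper does.
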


The proof is given in the appendix section \ref{AppB3}.

\begin{figure}
\centering
\hspace*{-1.5cm}
\begin{tabular}{cc}
\includegraphics[scale=0.28]{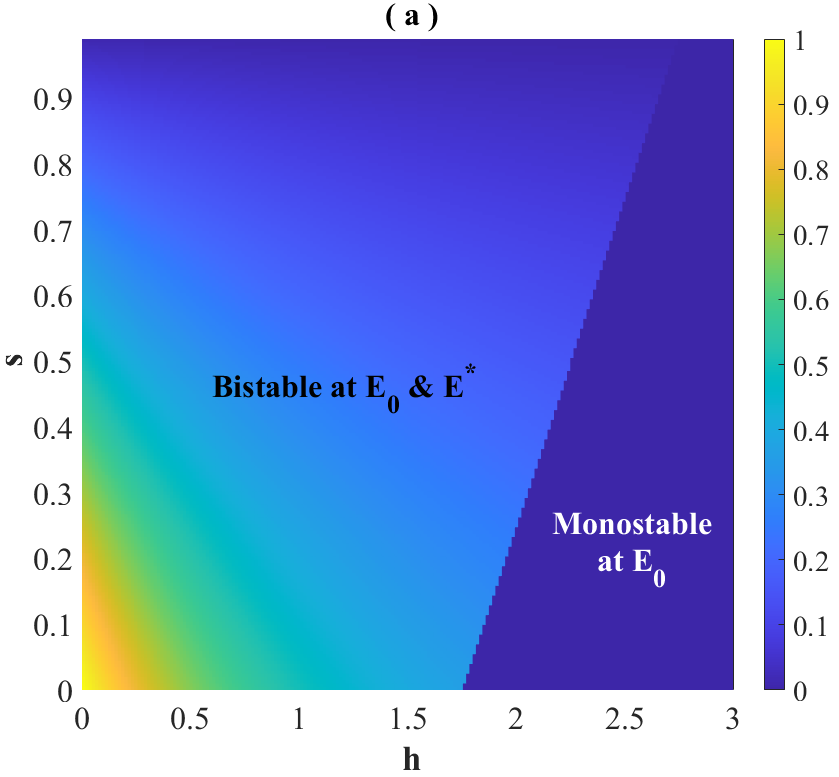}&
\includegraphics[scale=0.28]{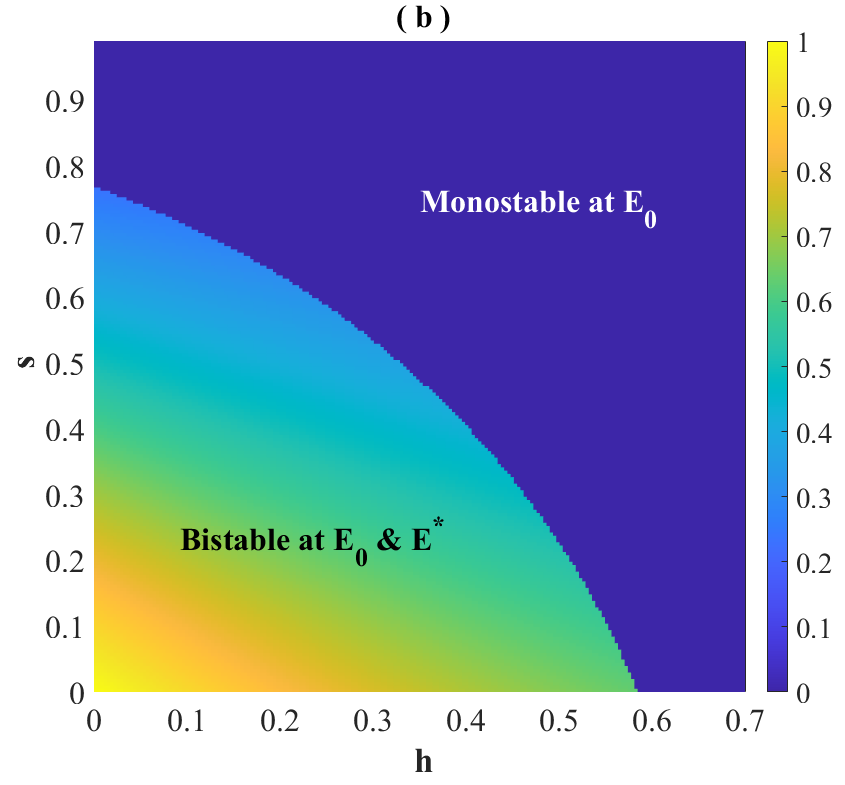}
\end{tabular}
\caption{The changes in the female-male gender ratio with the changes in the scaled harvesting rate ($h$) and the scaled stocking rate ($s$) for the $(a)$ FHMS model without Allee effect \eqref{eq:3} and $(b)$ FHMS model with weak Allee effect \eqref{eq:5}}\label{fig03}
\end{figure}

\subsection{Intermediate Harvesting}

As a result of the Allee effect in the invasive fish population, with the harvesting rate $h\geq h^*$, the eradication of invasive fish can be successfully achieved by stopping the harvesting and stocking once the female invasive fish population fall below some threshold value. The minimum number of females needed for its survival would give the minimum time of continuous harvesting and stocking in order to eradicate invasive fish species. Once the harvesting and stocking of the invasive fish are stopped, the dynamics of the system will be governed by system \eqref{eq:5} with $h=0$ and $s=0$. Each of these systems has a stable fish-free boundary equilibrium. In the phase plane, once the solution trajectories enter the basin of attraction of the boundary equilibrium, the fish species become extinct. Determining the threshold of the female fish population is critical as after discontinuing the harvesting and stocking, the invasive fish species would either go extinct or recover.

It has been observed that with the female-harvesting rate at $h^*$ and with different initial sex ratios, the minimum threshold female population for survivability lies in between $0.1$ and $0.15$ (cf. Fig. \ref{fig04}). It is seen that at a given supply rate, the minimum harvesting time of female invasive fish varies with the initial population densities of invasive fish. While the minimum time for the continuous harvesting of females increases with the initial population of equal sex ratio (cf. Figs. \ref{fig04}$(a)\;\&\;\ref{fig04}(b)$), population densities initially sex-skewed towards males require less harvesting time than the population sex-skewed towards females (cf. Figs. \ref{fig04}$(c)\;\&\;\ref{fig04}(d)$). For $h=0=s$ and $h=h^*$, let the separatrices of extinction and recovery of the invasive fish population be in the form $f=\Gamma^0_{h}(m)$ and $f=\Gamma^*_h(m)$ respectively. Then the regions of extinction for $h=0=s$ and $h=h^*$ in the $fm$-plane are given by: \\
$R^0_{h}=\left\{(f,\;m)\in R^2: 0\le f\le \Gamma^0_{hs}(m), \; 0\le m\le 1\right\}$ and \\
$R^*_h=\left\{(f,\;m)\in R^2: 0\le f\le \Gamma^*_h(m), \; 0\le m\le 1\right\}$ respectively.\\
The eradication strategy would be to discontinue harvesting and stocking of invasive fish once the invasive fish population enters the region $R^0_{h}\bigcap R^*_h$ (cf. Fig. \ref{fig05}).

At a given initial population density of the invasive fish species and the harvesting rate of female invasive fish at its minimum admissible value $h^*$, this technique will give the least possible continuous harvesting or (and) stocking time in order to eliminate the invasive fish species from the system. Fig. \ref{fig05} gives the minimum continuous harvesting or (and) stocking time of the invasive fish for the complete removal of the invasive fish population with different initial population densities.

\begin{figure}
\centering
\hspace*{-1cm}
\begin{tabular}{cc}
\includegraphics[scale=0.2]{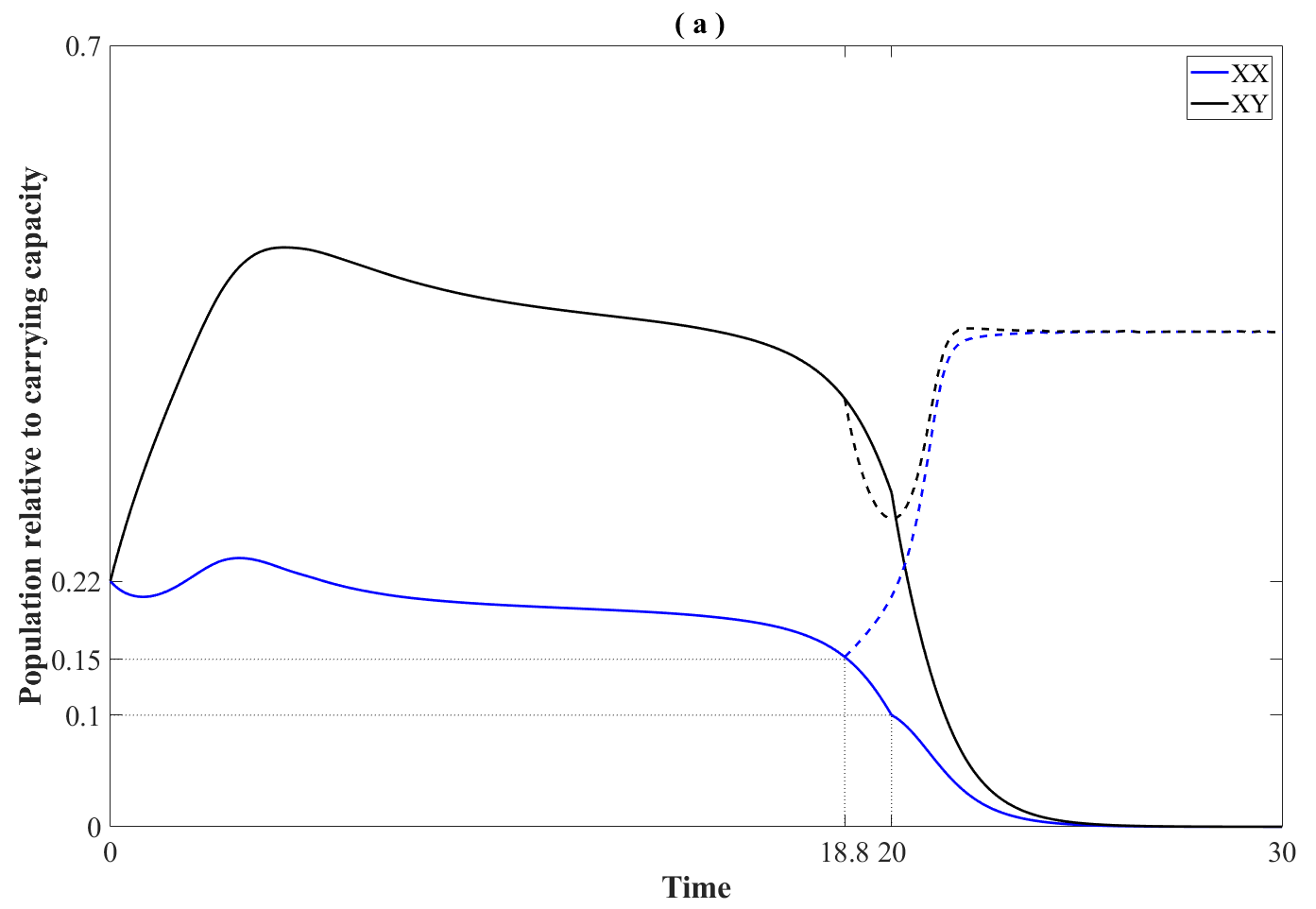} &
\includegraphics[scale=0.2]{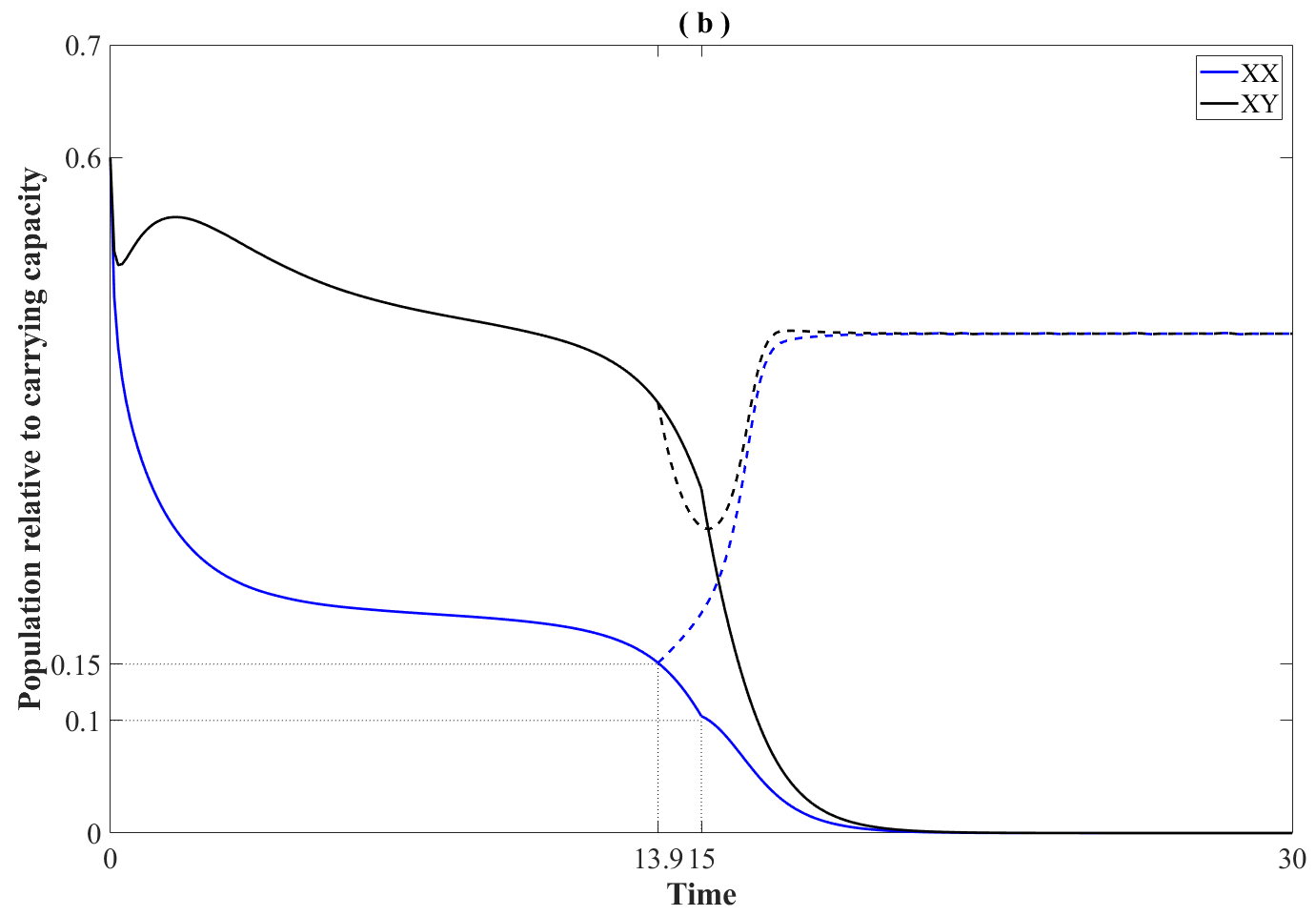}\\
\end{tabular}
\hspace*{-1cm}
\begin{tabular}{cc}
\includegraphics[scale=0.2]{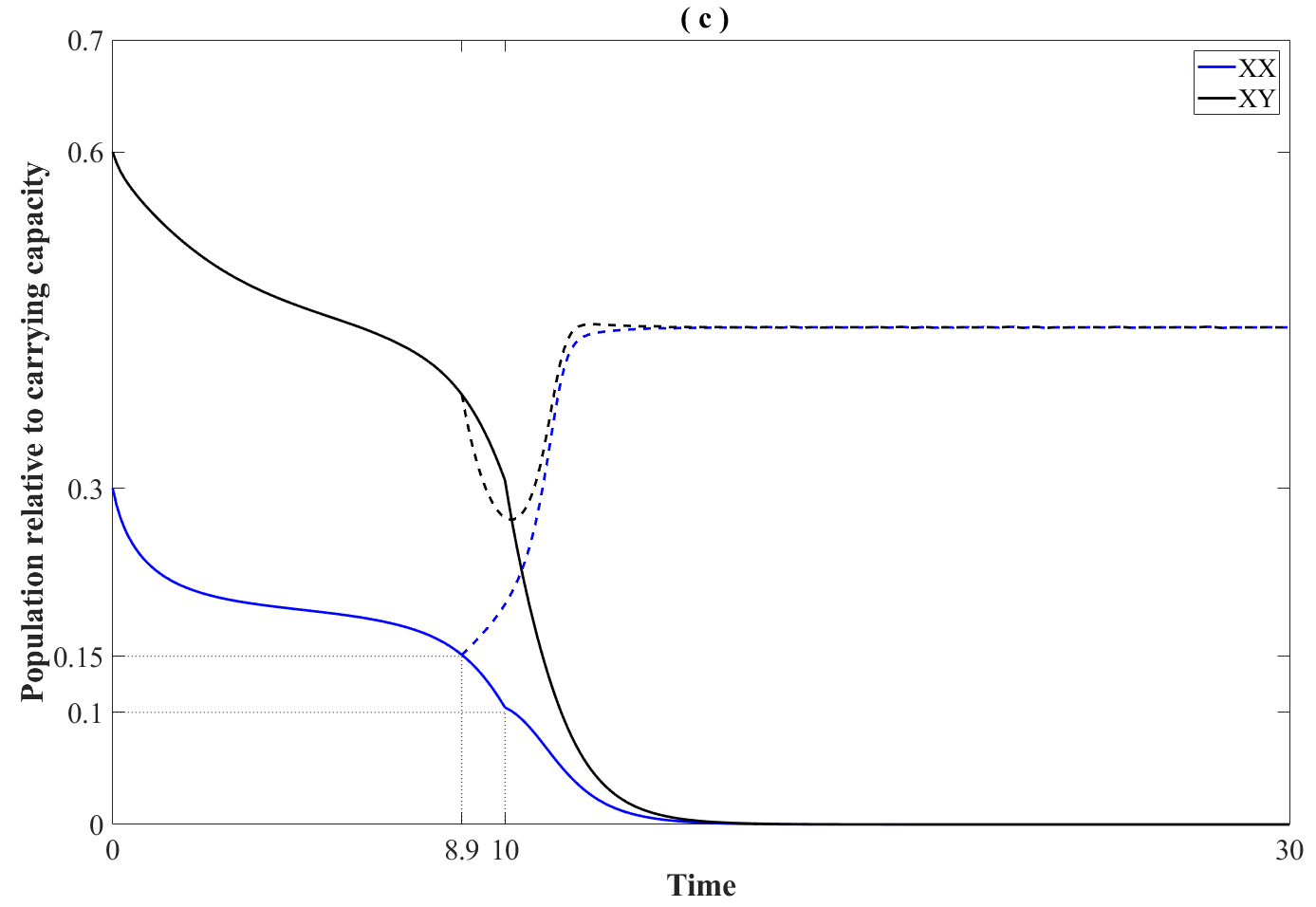}&
 \includegraphics[scale=0.2]{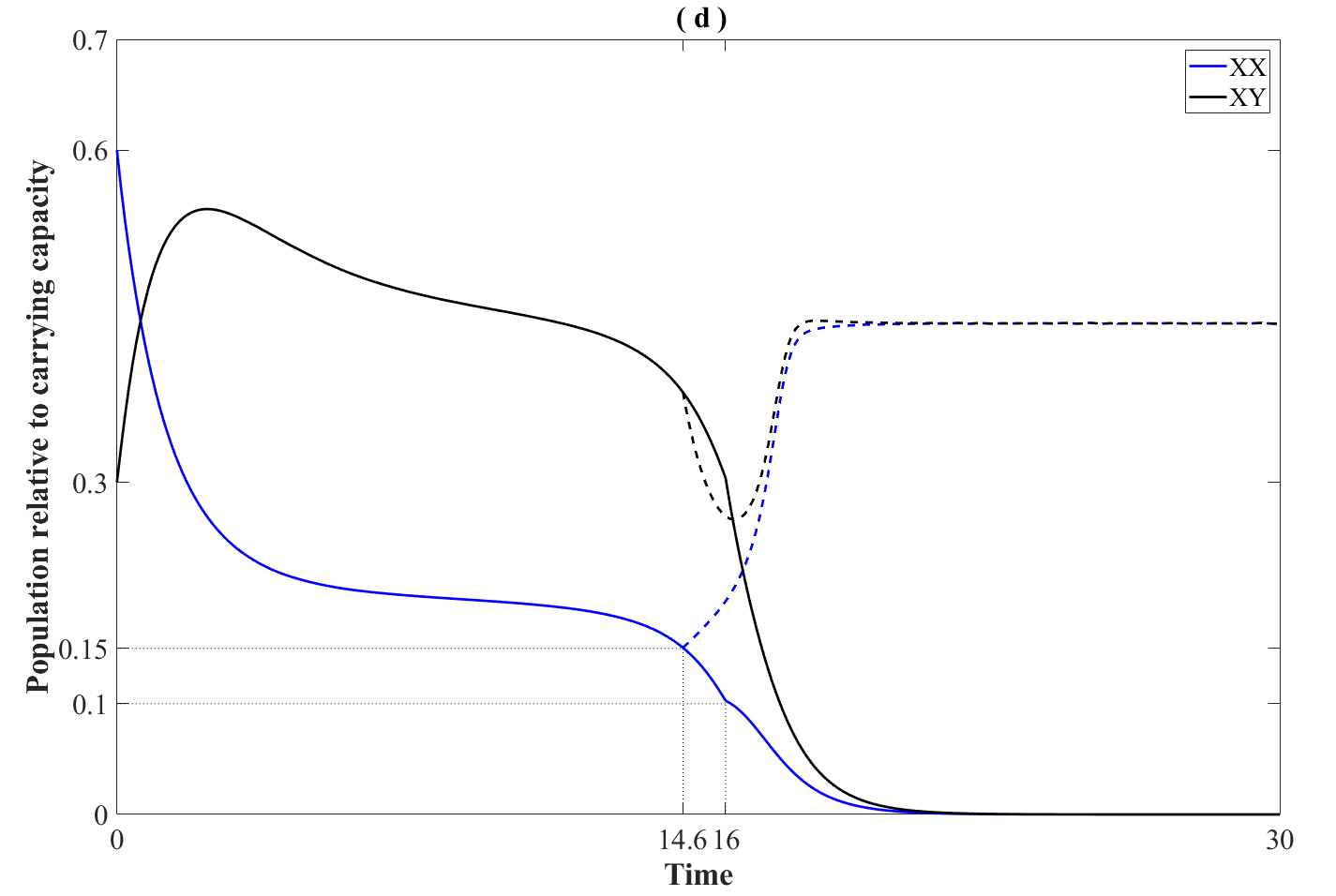}\\
\end{tabular}
\caption{The time for extinction and recovery of invasive fish population with $(a)$ low and  $(b)$ high initial population densities at $1 : 1$ sex-ratio. Time for extinction and recovery of invasive fish population with initial wild-type population sex-ratio is $(c)$ skewed towards males $(f(0):m(0)=1:2)$ and $(d)$ skewed towards females $(f(0):m(0)=2:1)$. The harvesting rate of the female species is kept at its minimum admissible value $h^*$ for eradication. The plots for extinction and recovery are represented by solid and dashed lines, respectively}\label{fig04}
\end{figure}

\begin{figure}
\centering
\hspace*{-1cm}
\begin{tabular}{cc}
\includegraphics[scale=0.32]{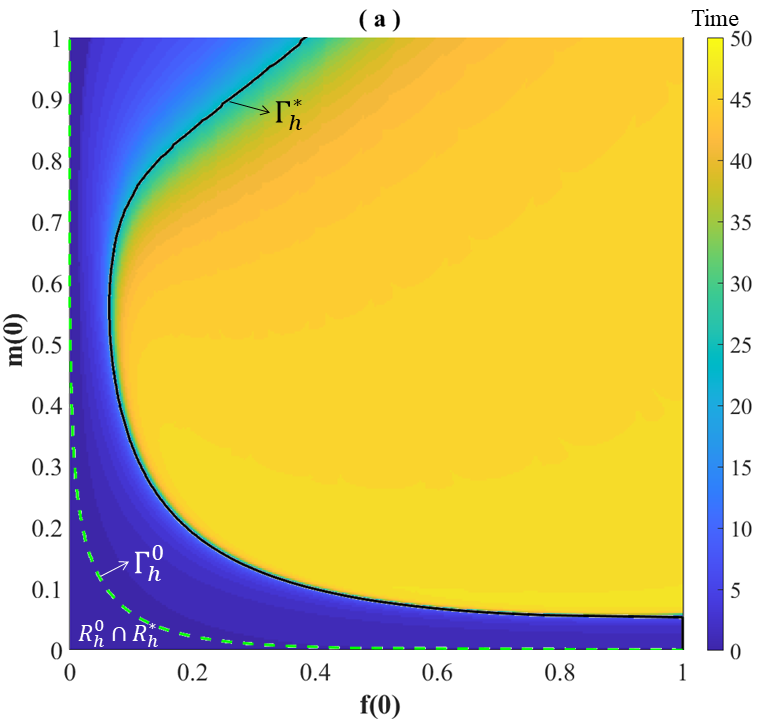}&
\includegraphics[scale=0.32]{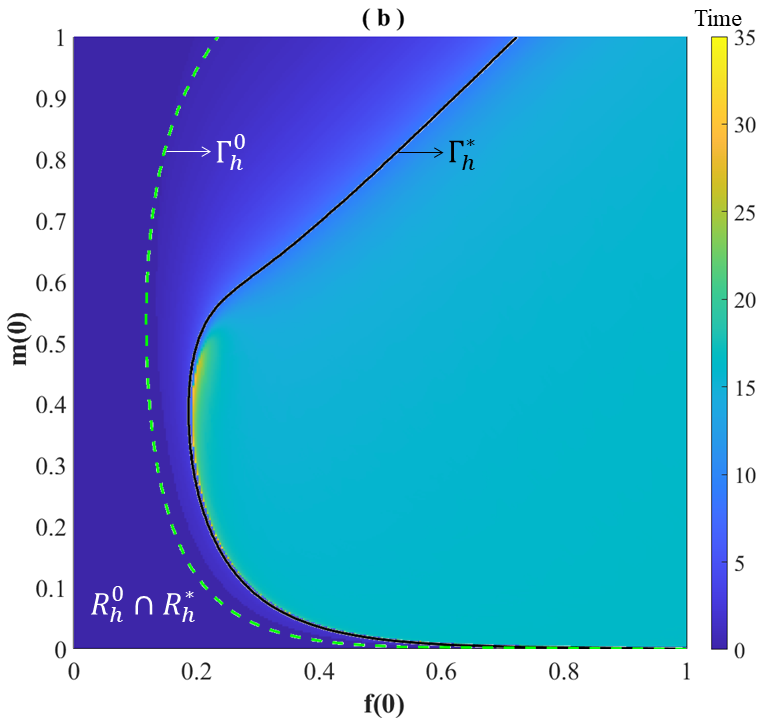}\\
\end{tabular}
\caption{The minimum continuous harvesting and stocking time for the elimination of the invasive fish population with different initial population densities for the $(a)$ FHMS system \eqref{eq:3} without Allee and $(b)$ FHMS system \eqref{eq:5} with weak Allee. The separatrices for the systems with $h=h^*$ and $h=0=s$ are denoted by the curves $\Gamma^*_h$ and $\Gamma^0_{h}$ respectively}
\label{fig05}
\end{figure}

\newpage

\subsection{The Extinction Boundary}

As mentioned earlier, the dynamics of sex structured two species mating models is generically like Fig. \ref{fig01} (a). The stable manifold of the saddle (separatrix) divides the phase space, delineated by an extinction boundary, also called ``Allee" threshold (when there are Allee effects included) \citep{B02} or threshold manifold \citep{SJ09}. If one picks initial data on one side of the boundary, solutions tend to the stable interior equilibrium, and if on the other side, solutions tend to the extinction equilibrium. Note, it is always seen to be of \emph{hyperbolic} shape (monotone w.r.t initial data). The literature is rife with examples of such mating models - where the extinction boundary always turns out to be hyperbolic \citep{B08,B02,B04}. Notably, Schreiber rigorously proved, that there exists a hyperbolic extinction boundary in the case that certain sufficient conditions on the system concerned are met \citep{S04} - strong monotonicity of the system being one of them. The results have been proved for general monotone systems as well \citep{ST01,SH06}. In the event that the sufficient conditions of \citep{S04} are not met, the shape of the Allee threshold remains an open problem \citep{B04}. In particular, in cases where a strong Allee effect has been introduced into TYC type mating dynamics, we still see a hyperbolic extinction boundary \citep{BPL20,JB20}, but this might be due to a lesser exploration of the parameter space.

\begin{remark}.
Consider a population system $\dot{x}= xG(x)$, where $x=(f,m)$. Then \citep{S04}, requires that if $\exp(G(0,0))$ is primitive (and some other technical conditions are met), then that implies that there exists a hyperbolic extinction boundary. In our case,
\vspace{-0.15cm}
\begin{equation}
\exp(G(0,0)) =
\begin{pmatrix}
\exp(-(1+h)) &  0\\
0 & \exp(s-1)
\end{pmatrix}.
\end{equation}
This is not primitive. Hence the shape of the exact extinction boundary in our case is unknown.
\end{remark}

We show, via numerical simulation that the FHMS model with weak Allee effect, can exhibit interesting dynamics in that the extinction boundary may not be hyperbolic. In Fig. \ref{fig:nwa}, we increase the stocking parameter $s$ and observe that the extinction boundary changes from the standard hyperbolic shape, to a loop. In Fig. \ref{AfterHomoclinic} we see a limit cycle form, collide with the saddle, the stable manifold, the unstable manifold and form a homoclinic orbit - via a homoclinic bifurcation.

\begin{figure}
   \centering
  \hspace*{-1.cm}
\begin{tabular}{ccc}
\includegraphics[scale=0.13]{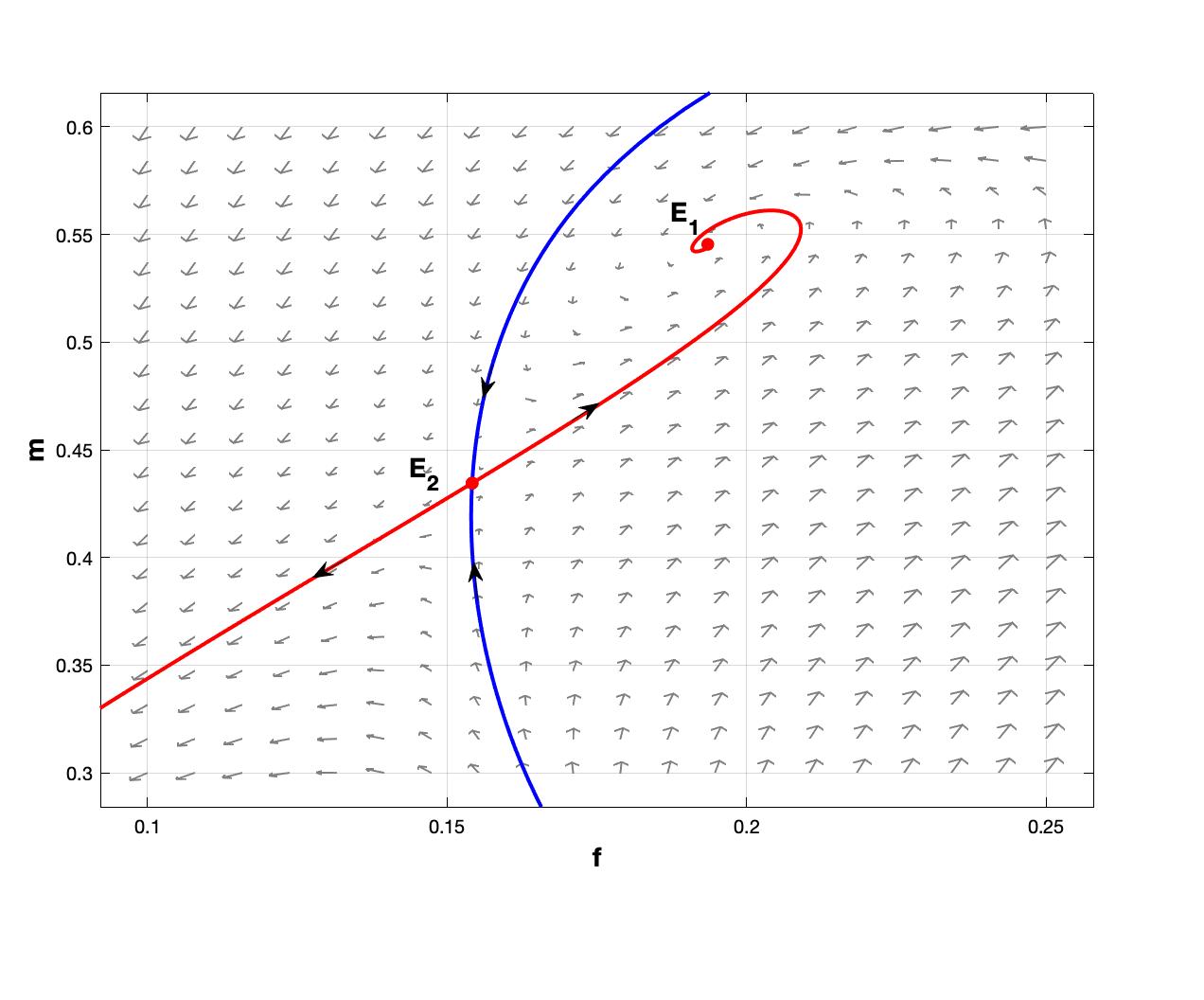} &
\includegraphics[scale=0.13]{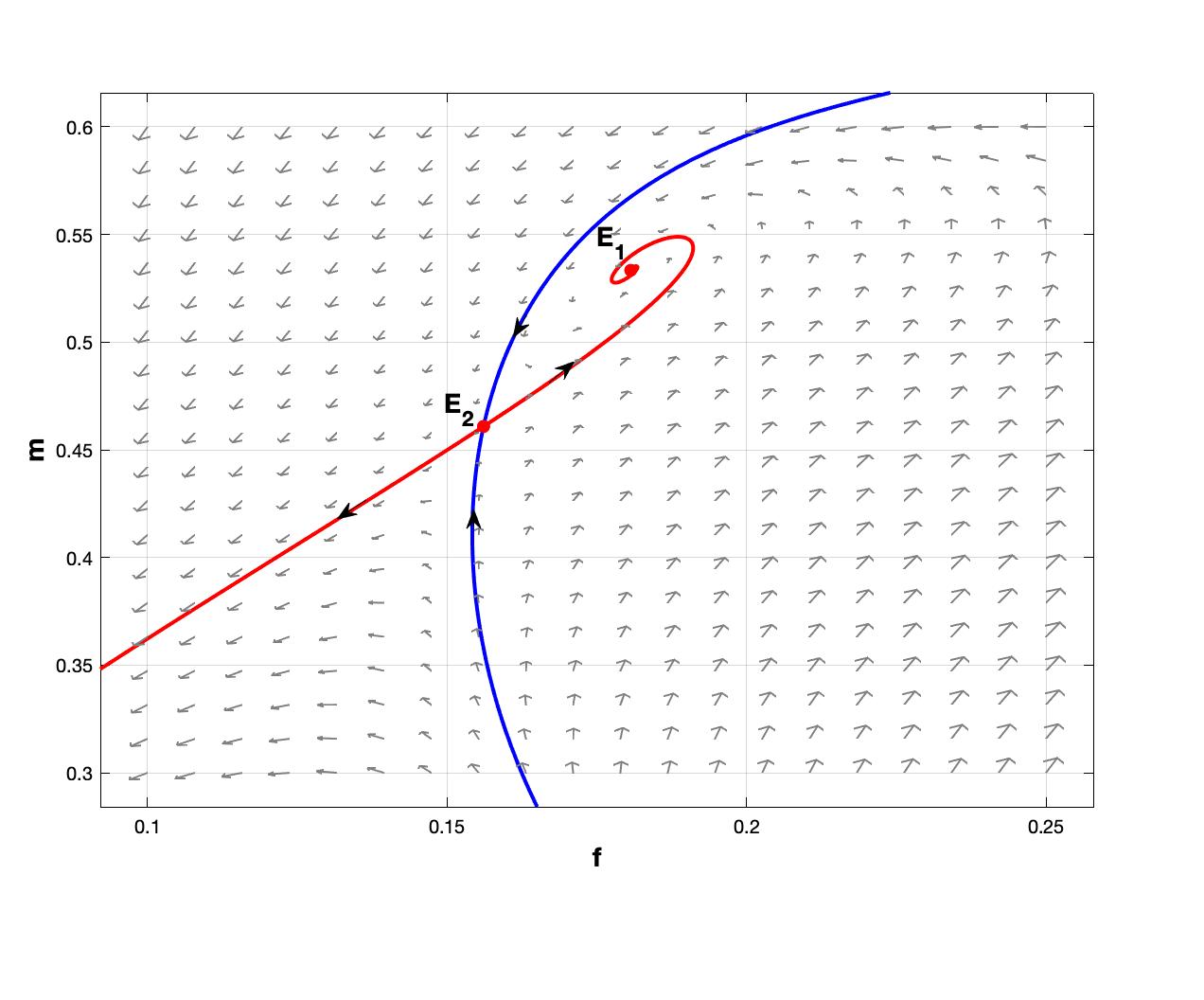} &
\includegraphics[scale=0.13]{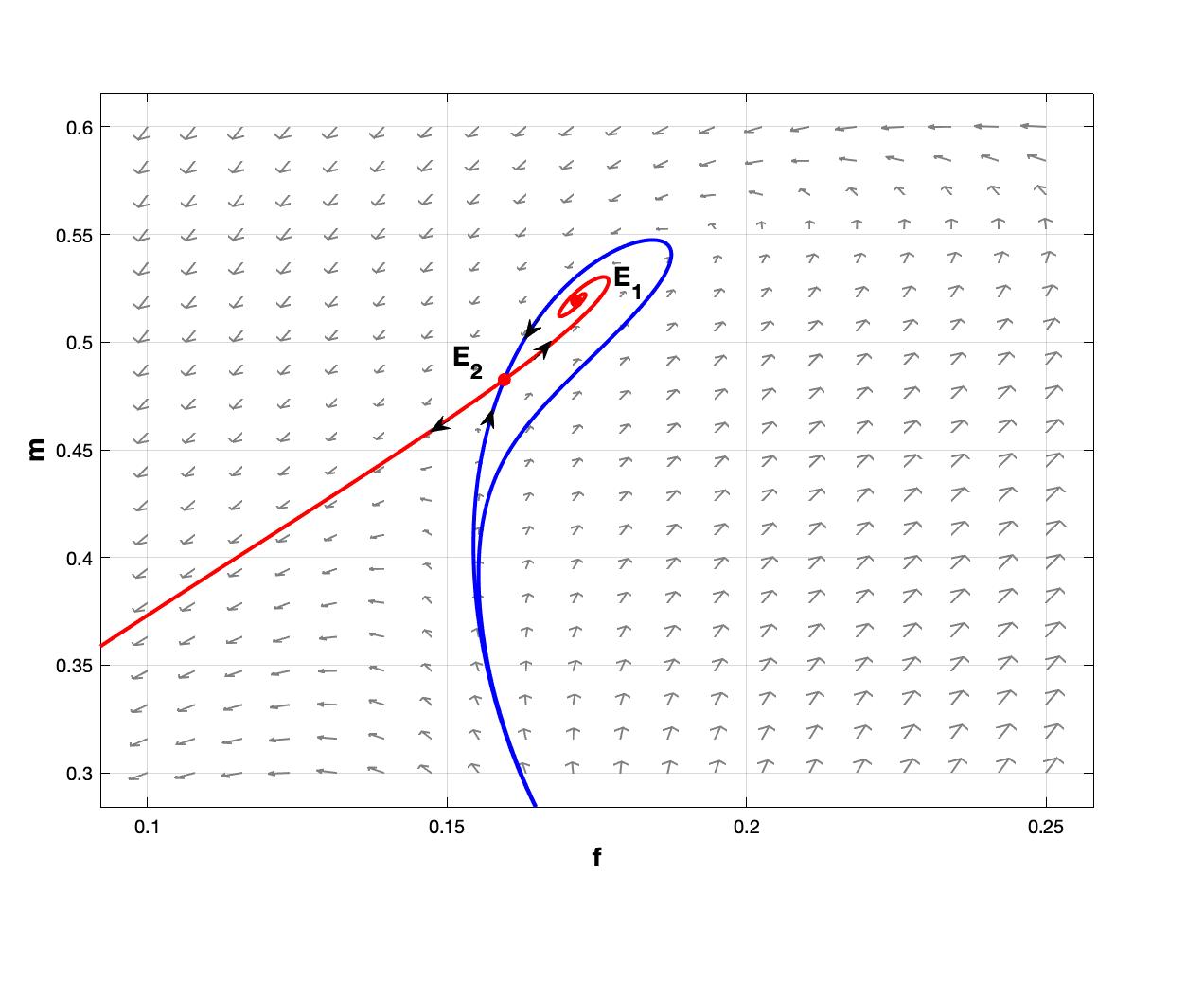} \\
(a) $s=0.56$&(b) $s=0.58$ &(c) $s=0.59$
\end{tabular}
\caption{Here we see the dynamics of a weak Allee effect in place. We notice in (a) and (b) the phase is split into two regions by the separatrix/stable manifold of the saddle $E_{2}$, with a large proportion of the initial data going to the recovery state $E_{1}$, which is a spiral sink. (c) shows this can change, as we vary the stocking parameter $s$ - the unstable manifold of $E_{2}$ can loop around $E_1$, and eventually lead to a homoclinic orbit. The other parameter values are $r=0.5, \alpha=90, h=0.24$}
\label{fig:nwa}
\end{figure}

\begin{figure}
   \centering
     \hspace*{-1.cm}
\begin{tabular}{ccc}
\includegraphics[scale=0.13]{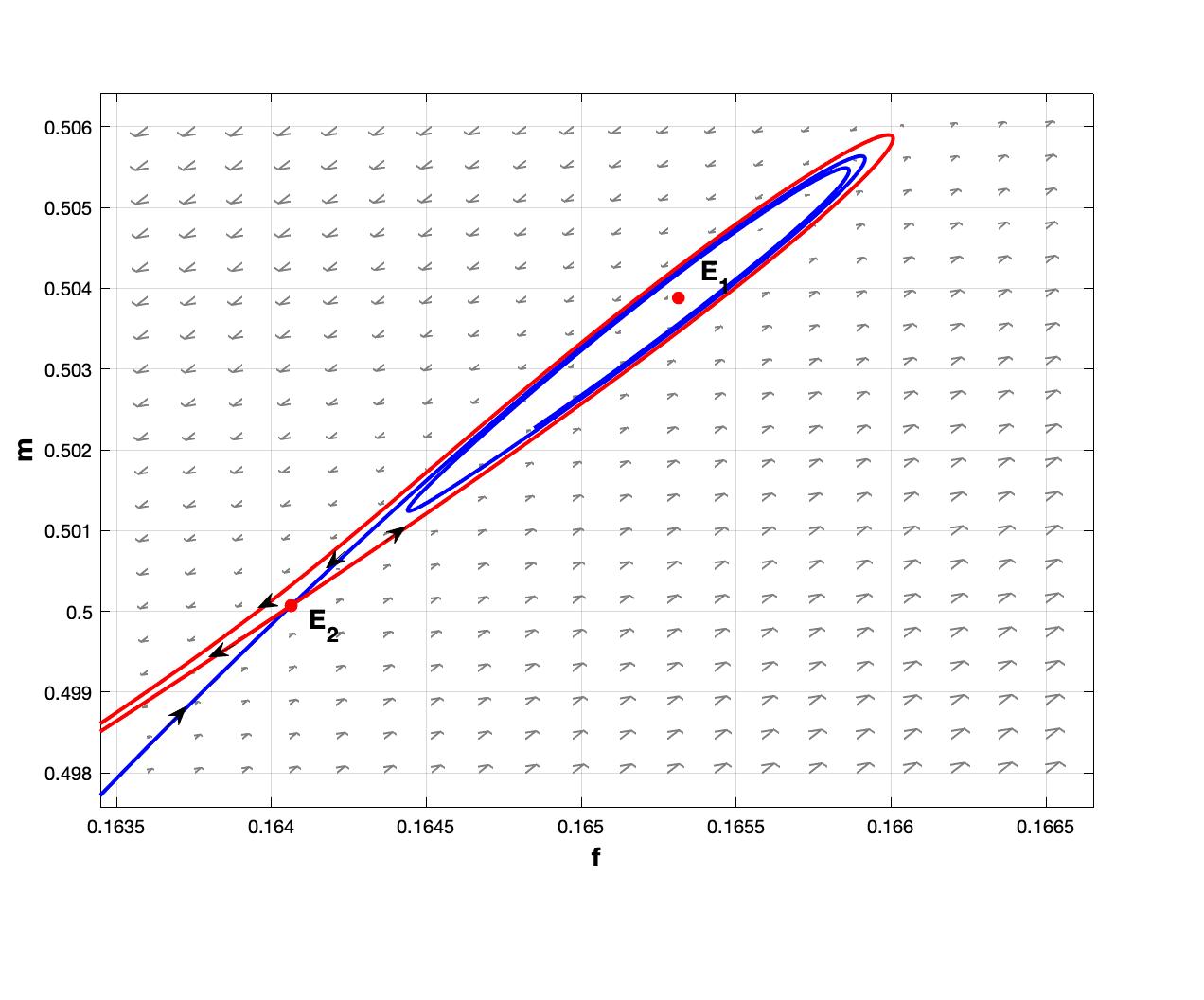}&
\includegraphics[scale=0.13]{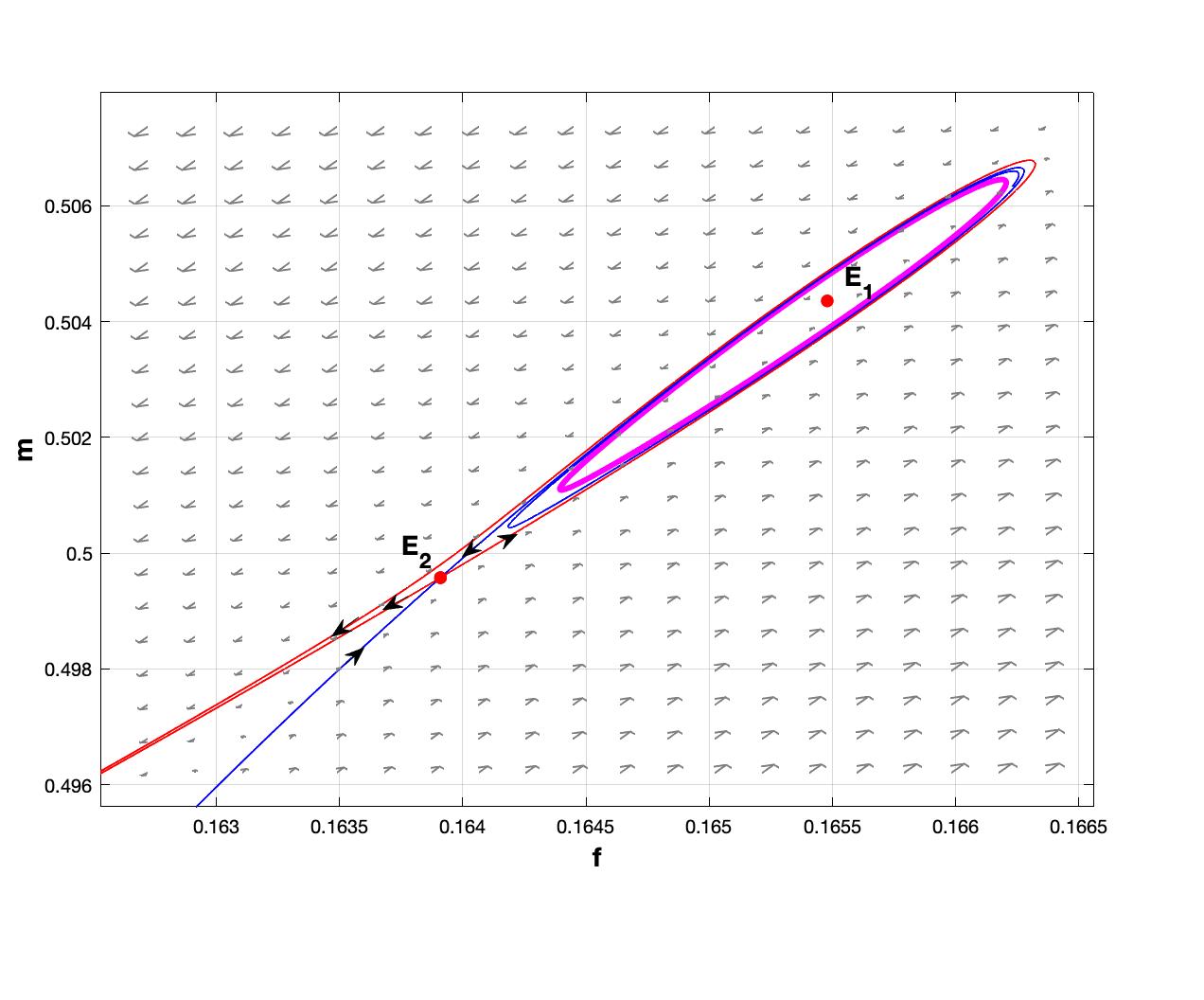} &
\includegraphics[scale=0.13]{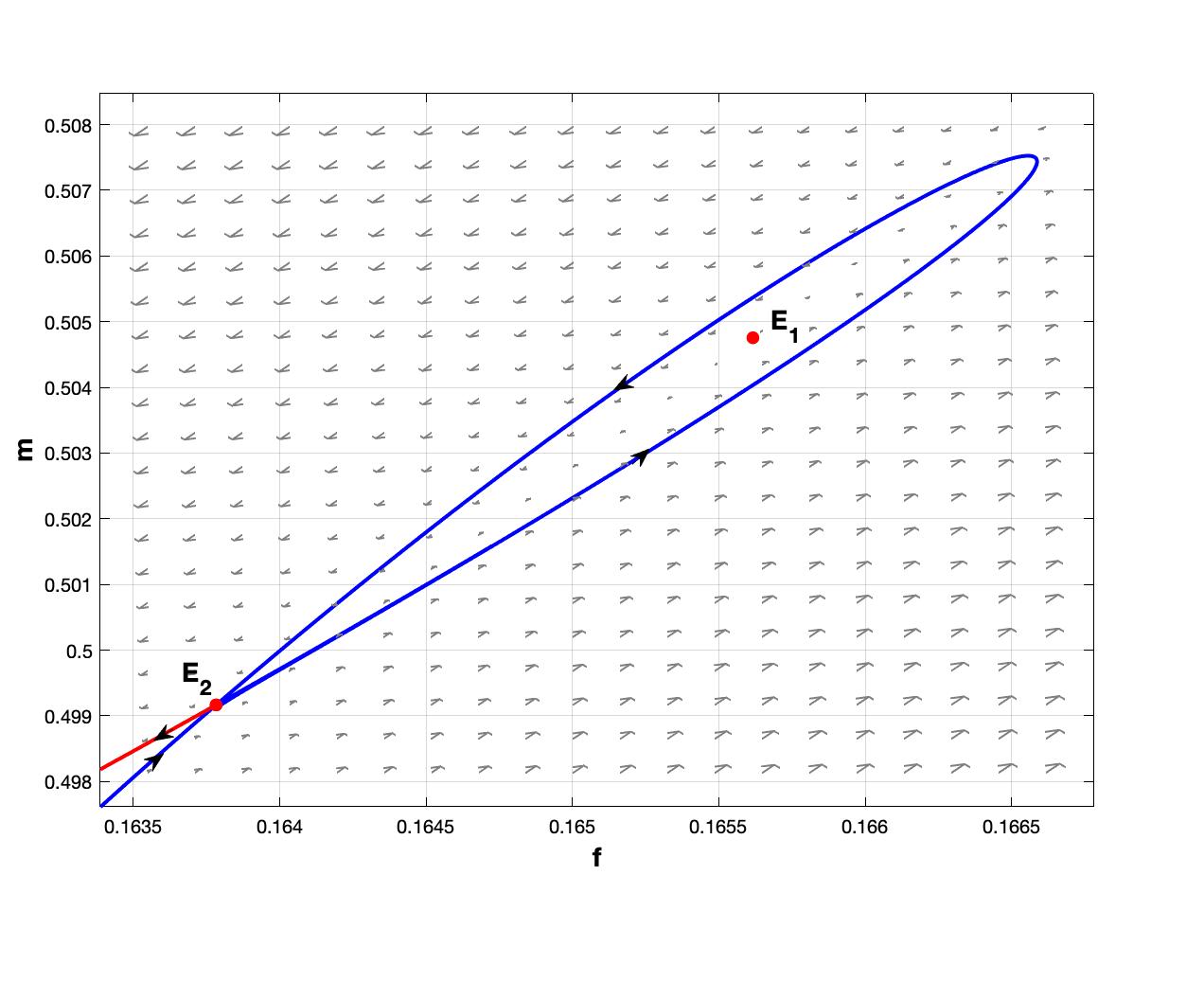} \\
(a) $s=0.59318$&(b) $s=0.59316$ &(c) $s=0.59314$
\end{tabular}
\caption{Phase space showing the path to the occurrence of a Homoclinic orbit for different stocking $s$ values. The stable and unstable manifolds are colored blue and red respectively. The unstable limit cycle is colored  pink. The parameter values are $r=0.5, \alpha=90, h=0.24$. $E_1$ is spiral source at $s = 0.59318$, it gains stability through a subcritical Hopf bifurcation, resulting in the occurrence of an unstable limit cycle at $s=0.59316$. The limit cycle grows at $s$ is decreased further, and at $s=0.59314$ it collides with the stable and unstable manifold and the saddle $E_2$ to form a homoclinic orbit}
\label{AfterHomoclinic}
\end{figure}

\subsection{The case of strong Allee effect}
Consider the scaled mating model without female harvesting and male stocking, but with a strong Allee effect
 \begin{equation}
\label{eq: strongAllee}
\begin{split}
\dfrac{d f}{d t}  =& r \alpha m (f-A)(1-f-m)-f\\
\dfrac{d m}{d t} =&r \alpha m (f-A)(1-f-m)-m ,
\end{split}
\end{equation}
where $f, m$ are the female and male population size and $A$ is the Allee threshold. Our aim is to investigate the extinction boundary as the Allee threshold $A$ is varied. What we notice is that for small Allee thresholds, $A < \approx 0.3$, the extinction boundary is non-hyperbolic again, see Fig. \ref{fig:Bendings}. However, it returns to a hyperbolic shape once the Allee threshold is increased past 0.7.


\begin{figure}[ht!]
   \centering
     \hspace*{-1.cm}
\begin{tabular}{ccc}
\includegraphics[scale=0.25]{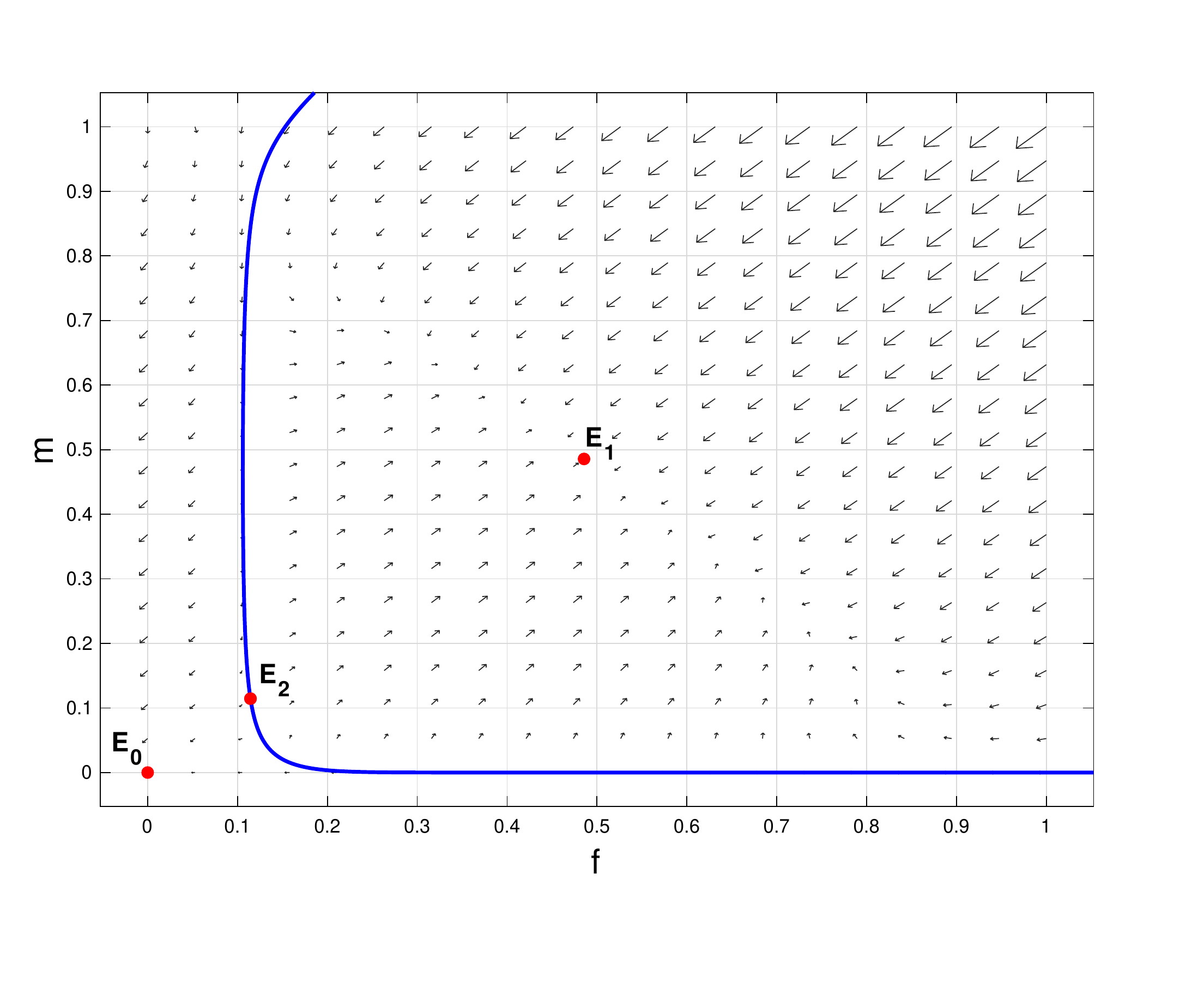}&
\includegraphics[scale=0.25]{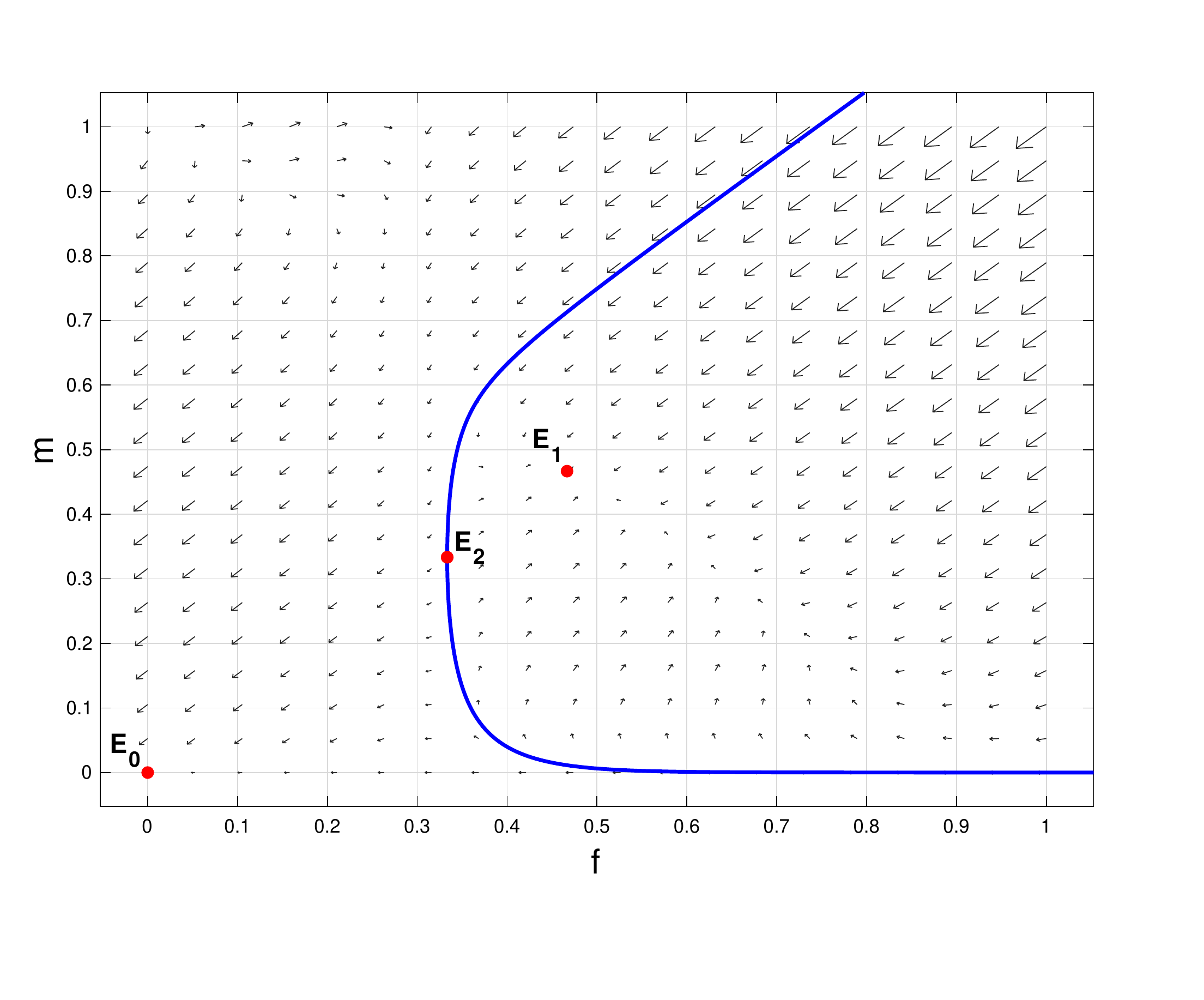} &
\includegraphics[scale=0.25]{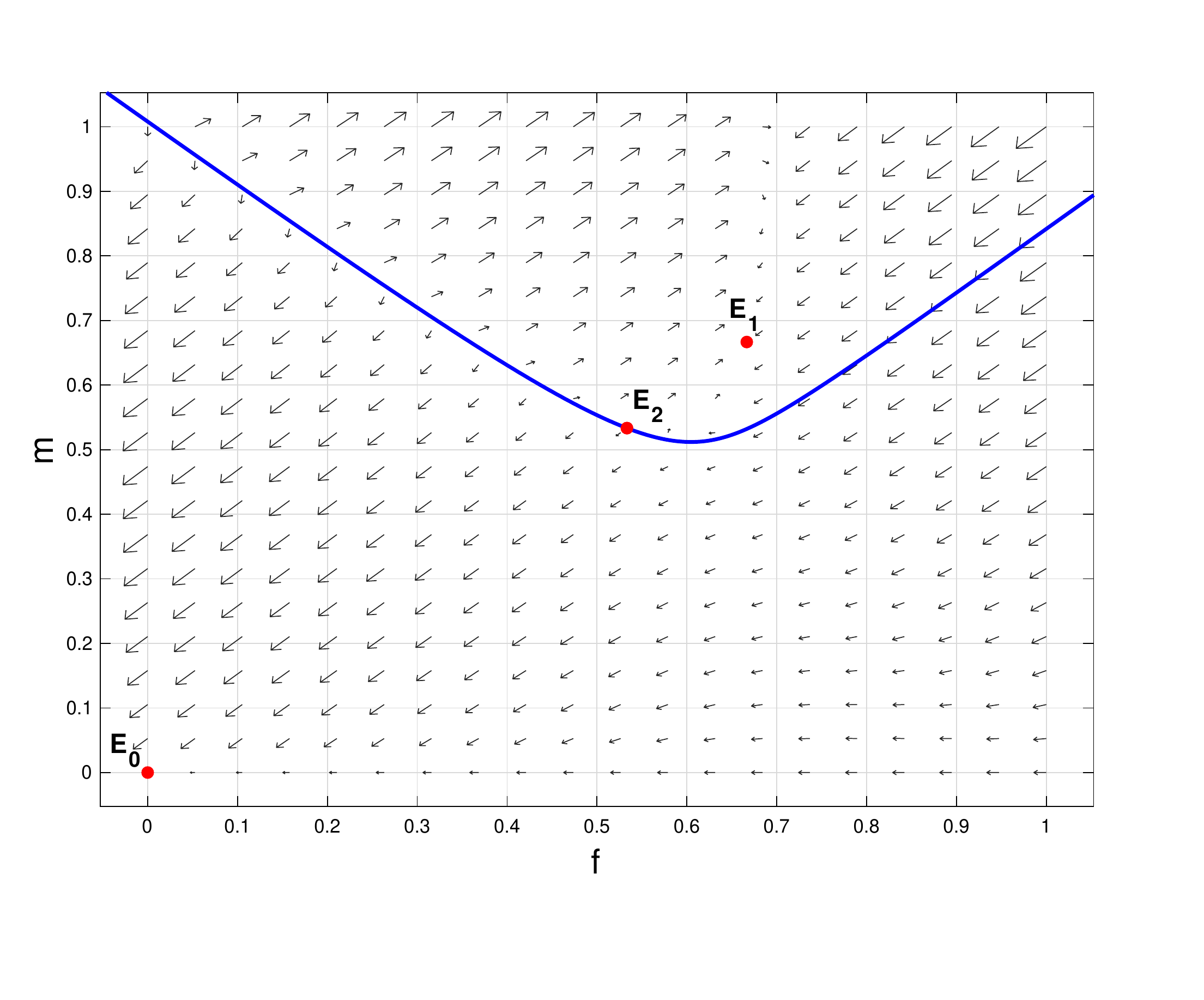} \\
(a) $A=0.1$&(b) $A=0.3$ &(c) $A=0.7$
\end{tabular}
 \caption{Here we observe the dynamics with a strong Allee effect in place via \eqref{eq: strongAllee}. The phase space of scaled mating model split into two regions by separatrix with different Allee thresholds. One can see that the extinction boundary bends, and is thus non-hyperbolic. Although a complete loop is not formed such as Fig. \ref{fig:nwa}. Also note increasing the Allee threshold (when a strong Allee effect is present) typically results in initial data from a larger region of the phase space, going to extinction. Here we notice an opposite effect, wherein increasing the Allee threshold from 0.3 to 0.7, causes a larger portion of the initial data to go to the recovery state.
 The parameter values $r=0.5$  and $\alpha=180$ are used in the simulation}
\label{fig:Bendings}
\end{figure}
\newpage
\section{Turing Instability}
Species disperse due to various factors such as search for food, mates, to look for resources as well as to avoid predators or competitors \cite{JDMurray}. Thus there is a rich history of spatially explicit models in mathematical biology \cite{O13}. With a spatially explicit model, just as with an ODE model, one can analyze the steady states. Herein, various additional dynamics are possible - in particular the states might not be homogenous in space. This could be indicative of the populations spreading and/or living in varying densities in a spatial domain. One might see the phenomenon of Turing instability per se \cite{JDMurray}, whereby large differences in the diffusion coefficients can destabilize the spatially homogenous steady state.
In this section, we explore the possibility of a Turing instability in systems (\ref{eq:3})  and (\ref{eq:5}) by including a spatial component.

\subsection{Spatially explicit FHMS}
We consider the system,
 \begin{equation}
\label{eq: NoTuringNoAllee}
\begin{split}
\dfrac{\partial f}{\partial t}  =& d_1 \Delta f + r \alpha f m \Big(1-(f+m) \Big)-(1+h)f \\
\dfrac{\partial m}{\partial t} =& d_2 \Delta m + (1-r) \alpha f m \Big(1-(f+m) \Big)+(s-1)m,
\end{split}
\end{equation}
where $f(x,t), m(x,t)$ are functions of both the spatial variable $x$ and time $t$. $\Delta$ is the standard Laplacian operator, representing the diffusion of the species in space. We prescribe Neumann boundary conditions $f_x=0, m_x=0$ which represents no flux of the species into or out of the spatial domain.
We also impose positive initial conditions $f(x,0)>0, m(x,0)>0$.  The positive constants $d_1$ and $d_2$ are the diffusion coefficients. We choose a suitable domain $\Omega= [0 ,\pi]$ for our numerical simulations.

\subsection{Spatially explicit FHMS with weak Allee effect}
Consider the system
 \begin{equation}
\label{eq: TuringAllee}
\begin{split}
\dfrac{\partial f}{\partial t}  =& d_{11} \Delta f + r \alpha f^2 m \Big(1-(f+m) \Big)-(1+h)f \\
\dfrac{\partial m}{\partial t} =& d_{22} \Delta m + (1-r) \alpha f^2 m \Big(1-(f+m) \Big)+(s-1)m,
\end{split}
\end{equation}
subject to Neumann boundary conditions $f_x=0, m_x=0$ and positive initial conditions $f(x,0)>0, m(x,0)>0$. The positive constants $d_{11}$ and $d_{22}$ are the diffusion coefficients. Similar to system (\ref{eq: NoTuringNoAllee}), we choose the domain $\Omega= [0 ,\pi]$.

\begin{theorem}.(Turing instability condition) Let $(f^*,m^*)$ be a non-trivial positive homogeneous steady state. For any given set of parameters, if the Jacobian $J^*_i=\begin{pmatrix}
  J_{11} & J_{12}\\
J_{21}& J_{22}
\end{pmatrix}$
of the reaction terms evaluated at $(f^*,m^*)$ and the diffusion constants $d_1, d_2$ satisfy
\begin{eqnarray}
\label{COND1} J_{11}+J_{22} < 0,\\
\label{COND2} J_{11}J_{22}-J_{21}J_{12} > 0,\\
\label{COND3} d_2J_{11}+d_1J_{22} > 0,\\
\label{COND4} (d_2J_{11}+d_1J_{22} )^2 - 4d_1d_2(J_{11}J_{22}-J_{21}J_{12} )> 0,
\end{eqnarray}
then in the absence of diffusion, $(f^*,m^*)$  is linearly stable and linearly unstable in the presence of diffusion.
\end{theorem}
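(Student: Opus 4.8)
The plan is to run the classical diffusion-driven (Turing) instability computation: linearize \eqref{eq: NoTuringNoAllee} about the homogeneous steady state, decompose the perturbation into the Neumann eigenfunctions of the Laplacian on $\Omega=[0,\pi]$, and track the spectrum of the resulting one-parameter family of $2\times2$ matrices indexed by the wavenumber. For the diffusion-free part, set $f=f^*+u$, $m=m^*+v$, discard quadratic and higher terms, and note that since $F_1(f^*,m^*)=F_2(f^*,m^*)=0$ the linearization is $\dot{(u,v)}^{T}=J^*_i(u,v)^{T}$. By the Routh--Hurwitz test for a $2\times2$ matrix, both eigenvalues have negative real part precisely when $\operatorname{tr}(J^*_i)=J_{11}+J_{22}<0$ and $\det(J^*_i)=J_{11}J_{22}-J_{12}J_{21}>0$, i.e. exactly \eqref{COND1}--\eqref{COND2}; this gives the ``linearly stable in the absence of diffusion'' half of the claim.

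\textbf{Mode decomposition.} The eigenfunctions of $-\Delta$ on $[0,\pi]$ with Neumann boundary conditions are $\cos(kx)$, $k=0,1,2,\dots$, with eigenvalues $k^2$. Expanding the perturbation as $(u,v)=\sum_{k\ge0}(a_k,b_k)(t)\cos(kx)$, each mode obeys $\dot{(a_k,b_k)}^{T}=M_k(a_k,b_k)^{T}$ with
\[
M_k=\begin{pmatrix} J_{11}-d_1k^2 & J_{12}\\[2pt] J_{21} & J_{22}-d_2k^2\end{pmatrix},
\]
and linear instability of $(f^*,m^*)$ for the PDE is equivalent to some $M_k$ with $k\ge1$ possessing an eigenvalue of positive real part. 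Because $\operatorname{tr}(M_k)=J_{11}+J_{22}-(d_1+d_2)k^2<0$ for all $k$ by \eqref{COND1}, a positive-real-part eigenvalue of $M_k$ can only arise through $\det(M_k)<0$, so everything reduces to studying the sign of the determinant.

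\textbf{The dispersion relation.} Writing $\lambda=k^2$, one computes
\[
P(\lambda):=\det(M_k)=d_1d_2\lambda^2-(d_2J_{11}+d_1J_{22})\lambda+(J_{11}J_{22}-J_{12}J_{21}).
\]
This is an upward parabola in $\lambda$ with $P(0)=\det(J^*_i)>0$ by \eqref{COND2}; its vertex lies at $\lambda_m=\dfrac{d_2J_{11}+d_1J_{22}}{2d_1d_2}$, which is positive exactly by \eqref{COND3}, and its minimum value is
\[
P(\lambda_m)=(J_{11}J_{22}-J_{12}J_{21})-\frac{(d_2J_{11}+d_1J_{22})^2}{4d_1d_2},
\]
which is negative exactly when \eqref{COND4} holds. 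Hence \eqref{COND1}--\eqref{COND4} together produce a nonempty open interval of positive wavenumbers $\lambda$ on which $P(\lambda)<0$, so $M_k$ has a real positive eigenvalue for $k$ in that band, and $(f^*,m^*)$ is linearly unstable once diffusion is switched on. (The identical argument applies verbatim to \eqref{eq: TuringAllee} with $(d_1,d_2)$ replaced by $(d_{11},d_{22})$.)

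\textbf{Main obstacle.} The one genuinely delicate point is that the admissible wavenumbers are the discrete integers $k\in\{1,2,\dots\}$, not a continuum: conditions \eqref{COND1}--\eqref{COND4} only guarantee an \emph{interval} of unstable $\lambda$, and one must ensure this interval contains at least one perfect square $k^2$ with $k\ge1$. On a fixed domain this is arranged in the usual way --- by taking the domain long enough, or equivalently by the standard genericity observation that the unstable band can be widened to admissible length by the choice of diffusion ratio --- and it is implicit in the statement that the parameters are chosen so that the band is nonempty on $\Omega=[0,\pi]$. Everything else is the routine dispersion-relation algebra (completing the square in $\lambda$) together with the Routh--Hurwitz criterion and the elementary spectral decomposition of a linear parabolic system under Neumann conditions.
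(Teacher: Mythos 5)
Your argument is correct and is exactly the classical derivation that the paper itself does not spell out but instead delegates to Murray's text: Routh--Hurwitz for the diffusion-free linearization, Neumann cosine-mode decomposition, and the sign analysis of the dispersion quadratic $\det(M_k)$ in $\lambda=k^2$, with conditions (3)--(4) placing the negative minimum of that quadratic at a positive wavenumber. Your additional remark that the unstable band must actually contain an admissible discrete mode $k^2$ on $\Omega=[0,\pi]$ is a caveat the paper (and the standard statement) leaves implicit, and you handle it appropriately.
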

We refer the reader to \citep{JDMurray} for a more detailed derivation of the conditions necessary and sufficient for the occurrence of Turing instability.
\begin{theorem}. (Necessary condition for Turing instability)
A necessary condition for the occurrence of Turing instability is  either $J_{11}<0, J_{22}>0$ or $J_{11}>0, J_{22}<0$ of the Jacobian $J^*_i.$
\end{theorem}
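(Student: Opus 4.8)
The plan is to extract the sign constraint on $J_{11}$ and $J_{22}$ directly from the Turing instability conditions \eqref{COND1}--\eqref{COND4}, with no reference to the particular reaction kinetics. The whole argument is a short sign-chase exploiting the tension between \eqref{COND1} and \eqref{COND3}: the former forces $J_{11}+J_{22}<0$, while the latter forces $d_2J_{11}+d_1J_{22}>0$ with $d_1,d_2>0$. This is essentially the classical activator--inhibitor observation, here phrased as a necessary condition.

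First I would argue by contradiction. Suppose $J_{11}\le 0$ and $J_{22}\le 0$. Since $d_1,d_2>0$, this gives $d_2J_{11}+d_1J_{22}\le 0$, contradicting \eqref{COND3}; hence at least one diagonal entry is strictly positive. Next, suppose $J_{11}\ge 0$ and $J_{22}\ge 0$; then $J_{11}+J_{22}\ge 0$ contradicts \eqref{COND1}; hence at least one diagonal entry is strictly negative. Combining the two conclusions, exactly one of $J_{11},J_{22}$ is strictly positive and the other strictly negative, which is precisely the stated dichotomy. I would also note in passing that \eqref{COND2} and \eqref{COND4} are not needed for this direction — they enter only when one produces an actual band of destabilizing wavenumbers — and that the degenerate cases are subsumed by the same two inequalities (for instance $J_{11}=0$ forces $J_{22}>0$ via \eqref{COND3}, which then violates \eqref{COND1}).

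There is no genuine obstacle in this proof; the only point requiring a little care is making the sign reversal \emph{strict}, so that the two alternatives in the statement are exhaustive and mutually exclusive rather than allowing a boundary case with a vanishing diagonal entry. This is handled exactly as in the degenerate-case remark above, so the argument closes cleanly from \eqref{COND1} and \eqref{COND3} alone.
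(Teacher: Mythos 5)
Your sign-chase is correct, and it is worth noting that the paper itself gives no proof of this statement at all: it simply asserts the necessary condition and defers the derivation of the Turing conditions to Murray. Your argument supplies exactly the missing step in the standard way -- from \eqref{COND1}, $J_{11}+J_{22}<0$ rules out both diagonal entries being nonnegative, and from \eqref{COND3}, $d_2J_{11}+d_1J_{22}>0$ with $d_1,d_2>0$ rules out both being nonpositive, so the entries must have strictly opposite signs -- and your handling of the degenerate boundary cases is fine. The only logical point you gloss over is that the paper's preceding theorem phrases \eqref{COND1}--\eqref{COND4} as \emph{sufficient} conditions (``if \dots then''), so to read \eqref{COND1} and \eqref{COND3} as consequences of Turing instability you should say, in one line, why they are themselves necessary: linear stability of $(f^*,m^*)$ without diffusion forces $\mathrm{Tr}(J^*_i)=J_{11}+J_{22}<0$, and diffusion-driven instability requires $d_1d_2k^4-(d_2J_{11}+d_1J_{22})k^2+\det(J^*_i)<0$ for some $k^2>0$, which, since $\det(J^*_i)>0$, forces $d_2J_{11}+d_1J_{22}>0$. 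With that sentence added your proof is complete and self-contained, which is arguably more informative than the paper's bare citation; you are also right that \eqref{COND2} and \eqref{COND4} play no role in this direction.
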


\begin{lemma}.
\label{lem:nt1}
System (\ref{eq: NoTuringNoAllee}) does not exhibit Turing instability.
\end{lemma}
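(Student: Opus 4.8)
The plan is to reduce the claim to the necessary condition for Turing instability stated just above (the second Theorem in the excerpt): a homogeneous steady state can be destabilised by diffusion only if the diagonal entries $J_{11}$ and $J_{22}$ of the reaction Jacobian evaluated there have opposite signs; equivalently, since $d_1,d_2>0$, only if \eqref{COND3}, $d_2 J_{11}+d_1 J_{22}>0$, can hold. So it suffices to show that at every non-trivial positive homogeneous steady state of \eqref{eq: NoTuringNoAllee} one has $J_{11}<0$ \emph{and} $J_{22}<0$ simultaneously.

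First I would identify the relevant steady states. The spatially homogeneous steady states of \eqref{eq: NoTuringNoAllee} are precisely the equilibria of the kinetic system \eqref{eq:3}: the extinction state $E_0=(0,0)$ and the coexistence states $E_i^*=(f_i^*,\mu f_i^*)$ with $\mu=\frac{(1-r)(1+h)}{r(1-s)}$, characterised by the relations $r\alpha m(1-f-m)=1+h$ and $(1-r)\alpha f(1-f-m)=1-s$. Only the $E_i^*$ are non-trivial and positive, so they are the only candidates for a Turing bifurcation.

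Next I would compute the diagonal entries of the reaction Jacobian at $E_i^*$ and simplify them using these equilibrium relations. Writing $P=1-f-m$, so that $F_1=r\alpha fmP-(1+h)f$ and $F_2=(1-r)\alpha fmP+(s-1)m$, differentiation gives $J_{11}=\partial F_1/\partial f = r\alpha mP - r\alpha fm-(1+h)$ and $J_{22}=\partial F_2/\partial m = (1-r)\alpha fP-(1-r)\alpha fm+(s-1)$. Substituting $r\alpha mP=1+h$ and $(1-r)\alpha fP=1-s$ collapses these to $J_{11}=-r\alpha f_i^{*}m_i^{*}$ and $J_{22}=-(1-r)\alpha f_i^{*}m_i^{*}$. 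Since $r\in(0,1)$, $\alpha>0$ and $f_i^{*},m_i^{*}>0$, both entries are strictly negative.

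Finally I would conclude: because $J_{11}<0$ and $J_{22}<0$, they cannot have opposite signs; equivalently $d_2 J_{11}+d_1 J_{22}<0$ for every $d_1,d_2>0$, so \eqref{COND3} fails. Hence no choice of diffusion coefficients can render a positive homogeneous steady state of \eqref{eq: NoTuringNoAllee} diffusively unstable, i.e.\ the system admits no Turing instability. There is no substantive obstacle here; the only point needing care is the algebraic bookkeeping when collapsing $J_{11}$ and $J_{22}$ via the equilibrium identities. The result ultimately reflects the fact that in the non-Allee FHMS kinetics each species is self-limiting at coexistence, which precludes the ``activator–inhibitor'' sign structure that Turing patterns require.
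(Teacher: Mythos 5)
Your proof is correct and follows essentially the same route as the paper: both arguments observe that at any coexistence steady state of the non-Allee kinetics the diagonal Jacobian entries are $J_{11}=-r\alpha\mu f_i^{*2}<0$ and $J_{22}=-(1-r)\alpha\mu f_i^{*2}<0$ (your $-r\alpha f^*m^*$ and $-(1-r)\alpha f^*m^*$ with $m^*=\mu f^*$), so the necessary sign condition for Turing instability fails. The only difference is that you derive these entries explicitly from the equilibrium identities, whereas the paper simply cites the Jacobian computed in its Appendix A.
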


\begin{proof}.
Clearly, $J_{11} < 0,J_{22} < 0$ in the Jacobian $J^*_i $ of the reaction terms in system (\ref{eq: NoTuringNoAllee}) (that is, the Jacobian of system (\ref{eq:3}) seen in Appendix \ref{AppA}) and hence does not meet the necessary condition for the occurrence of Turing instability. Hence proof.
\end{proof}

In this section, we use the following set of parameters for all numerical experiments:
\begin{equation}\label{eq:Param_Param}
r=0.5, \alpha=90,h=0.24, s=0.5931.
\end{equation}

\begin{theorem}.
\label{thm:t1}
System (\ref{eq: TuringAllee}) \textbf{exhibits} Turing instabilty.
\end{theorem}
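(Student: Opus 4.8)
The plan is to verify the four inequalities \eqref{COND1}--\eqref{COND4} of the Turing condition directly at the linearly stable coexistence equilibrium $E_1^*=(f_1^*,\mu f_1^*)$ of the non-spatial system~\eqref{eq:5}, for the parameters $r=0.5,\ \alpha=90,\ h=0.24,\ s=0.5931$ of~\eqref{eq:Param_Param}, and then to exhibit a pair of diffusion constants $d_1:=d_{11},\ d_2:=d_{22}$ that close the argument. Conditions~\eqref{COND1} and~\eqref{COND2} are precisely $\mathrm{Tr}(J_1^*)<0$ and $\mathrm{Det}(J_1^*)>0$, i.e.\ linear stability of $E_1^*$ in the absence of diffusion; for the chosen $s$ this holds because $s=0.5931$ sits below the subcritical Hopf value (and below the saddle--node value, so the two coexistence equilibria still exist), whence $E_1^*$ is a stable spiral sink (cf.\ Fig.~\ref{AfterHomoclinic}). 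The substance of the proof is therefore two-fold: (i) getting the sign pattern of $J_1^*$ right so that the necessary condition for Turing instability (opposite signs of $J_{11}$ and $J_{22}$) is met, and (ii) choosing the diffusion ratio so that~\eqref{COND3} and~\eqref{COND4} hold.

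For step~(i) I would evaluate the Jacobian $J^*_i$ of~\eqref{eq:5} given earlier at $E_1^*$. Its $(2,2)$ entry $J_{22}=-(1-r)\alpha\mu f_1^{*3}$ is negative for every admissible parameter choice, so --- in sharp contrast with the no-Allee system, where the quadratic nullclines force $J_{11}<0$ as well and Turing instability is impossible by Lemma~\ref{lem:nt1} --- the necessary condition demands
\[
J_{11}=r\alpha\mu f_1^{*2}\bigl(1-2f_1^*-\mu f_1^*\bigr)>0,\qquad\text{equivalently}\qquad f_1^*<\tfrac{1}{2+\mu}.
\]
To check this I would exploit the cubic $G(f)=f^{3}-\tfrac{1}{1+\mu}f^{2}+\tfrac{1+h}{r\alpha\mu(1+\mu)}$, whose two positive roots $f_1^*\le f_2^*$ are the $f$-coordinates of the coexistence equilibria. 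Because $G(0)>0$ and $G$ decreases strictly on $(0,f_c)$, where $f_c=\tfrac{2}{3(1+\mu)}$ is the positive critical point of $G$, one has $f_1^*\le f_c$; and since $G$ is negative, among positive $f$, exactly on $(f_1^*,f_2^*)$, the bound $f_1^*\le f_2^*<\tfrac1{2+\mu}$ follows once one knows that $\tfrac1{2+\mu}>f_c$ and $G\bigl(\tfrac1{2+\mu}\bigr)>0$. The former holds because $\mu=\tfrac{(1-r)(1+h)}{r(1-s)}>1$ (here $r=\tfrac12$ forces $\mu=\tfrac{1+h}{1-s}>1$ since $h>0$), and the latter reduces, after simplification, to the elementary inequality $r\alpha\mu<(1+h)(2+\mu)^{3}$, which the quoted numbers satisfy with room to spare. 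Hence $J_{11}>0>J_{22}$ at $E_1^*$, and~\eqref{COND2} additionally pins down $J_{12}J_{21}<0$.

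With the sign pattern $J_{11}>0>J_{22}$ and $\mathrm{Det}(J_1^*)>0$ in hand, step~(ii) is mechanical. Writing $\theta=d_2/d_1>0$, inequality~\eqref{COND3} reads $\theta J_{11}+J_{22}>0$, i.e.\ $\theta>-J_{22}/J_{11}$, while~\eqref{COND4} becomes the quadratic inequality $J_{11}^{2}\theta^{2}+\bigl(2J_{11}J_{22}-4\,\mathrm{Det}(J_1^*)\bigr)\theta+J_{22}^{2}>0$, whose leading coefficient $J_{11}^{2}$ is positive, so it holds for all sufficiently large $\theta$ (namely $\theta$ above the larger root of the quadratic --- the Turing critical ratio obtained by imposing equality in~\eqref{COND4} --- should that root be real). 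It then remains only to take, say, $d_1=1$ and $d_2$ large enough that $\theta$ exceeds both $-J_{22}/J_{11}$ and that critical ratio, and also large enough that the band of unstable continuous wavenumbers catches an eigenvalue $n^{2}$ ($n\ge1$) of $-\Delta$ on $[0,\pi]$ under Neumann conditions; then~\eqref{COND1}--\eqref{COND4} all hold and the Turing condition delivers the claimed instability. The main obstacle is step~(i): the whole mechanism hinges on $J_{11}>0$ at the stable equilibrium, which reduces to the root-location estimate $f_1^*<\tfrac1{2+\mu}$ for $G$ at the prescribed parameters; once that sign is secured the diffusion constants can always be tuned, and the remaining verification of~\eqref{COND3}--\eqref{COND4} is just bookkeeping.
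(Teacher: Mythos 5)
Your route is essentially the paper's: fix the parameter set \eqref{eq:Param_Param}, verify \eqref{COND1}--\eqref{COND2} as stability of the coexistence state without diffusion, check the sign pattern $J_{11}>0>J_{22}$, and then tune $d_1,d_2$ so that \eqref{COND3}--\eqref{COND4} hold. Where you genuinely add something is step (i): the paper simply evaluates the Jacobian numerically at $(f^*,m^*)=(0.165847,0.505407)$ and reads off $J_{11}=0.614441>0$, $J_{22}=-0.625559<0$, whereas you prove $J_{11}>0$ analytically via the root-location bound $f^*<\tfrac{1}{2+\mu}$ for the cubic $G$; your reduction of $G\bigl(\tfrac{1}{2+\mu}\bigr)>0$ to $r\alpha\mu<(1+h)(2+\mu)^3$ and of $\tfrac{1}{2+\mu}>\tfrac{2}{3(1+\mu)}$ to $\mu>1$ both check out, and the numbers ($r\alpha\mu\approx137<(1+h)(2+\mu)^3\approx160$) confirm it. That is a cleaner argument than the paper's, and it covers both coexistence roots at once. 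Your stability claim for \eqref{COND1}--\eqref{COND2} leans on the Hopf analysis ($s=0.5931<s_{hf}=0.59317665$) rather than direct evaluation; for full rigor you should still compute $\mbox{Tr}(J^*)$ and $\mbox{Det}(J^*)$ at $s=0.5931$ itself (the paper gets $-0.011118$ and $0.0211209$), but this is the same verification in different clothing.

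The one concrete misstep is in your final choice of diffusion constants. Taking $d_1=1$ and letting $d_2=\theta$ grow does satisfy \eqref{COND3}--\eqref{COND4} for large $\theta$, which is all the paper's stated Turing criterion asks for; but your additional (and, on the fixed domain $\Omega=[0,\pi]$ with Neumann conditions, genuinely needed) requirement that the unstable band contain a discrete eigenvalue $n^2\ge 1$ of $-\Delta$ cannot be met this way. Indeed, writing $h(k^2)=d_1d_2k^4-(d_2J_{11}+d_1J_{22})k^2+\mbox{Det}(J^*)$, one has $h(J_{11}/d_1)\cdot d_1= \mbox{Det}(J^*)-J_{11}J_{22}=-J_{12}J_{21}>0$, so the upper edge of the unstable band satisfies $k_+^2<J_{11}/d_1\approx 0.61$ for $d_1=1$, \emph{uniformly in} $d_2$; no mode $k^2=n^2\ge1$ is ever destabilized, no matter how large you take $d_2$. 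The fix is to shrink $d_1$ rather than inflate $d_2$: with the paper's choice $d_{11}=0.0001$, $d_{22}=0.1$ one gets $k_+^2$ of order $J_{11}/d_{11}\sim 6\times10^{3}$, so plenty of admissible modes (e.g.\ the $n=10$ perturbation used in Fig.~\ref{fig:Turing1}) lie in the band. With that substitution your argument closes, and otherwise matches the paper's proof while being more explicit about why $J_{11}>0$.
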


\begin{proof}.
Consider the parameter set in Eq.(\ref{eq:Param_Param}). With these values,  a homogeneous steady state of system (\ref{eq: TuringAllee})  is $(f^*,m^*)=(0.165847, 0.505407 )$. The Jacobian evaluated at $(f^*,m^*)$  yields $J^*_i =\begin{pmatrix}
  0.614441 & -0.218659\\
1.85444 & -0.625559
\end{pmatrix}.$
\\ Clearly, $J_{11} = 0.614441> 0,J_{22} = -0.625559< 0$ of $J_i^*$ meets the necessary condition for Turing instability occurrence. The eigenvalues of $J_i^*$  are $\lambda_1=-0.00555889 + 0.145224 $i and $\lambda_2 = -0.00555889 - 0.145224  $i. The real parts of $\lambda_1, \lambda_2$  are both negative and hence the steady state $(f^*,m^*)$  is locally stable. Simple calculations show that, $J_{11}+J_{22} = -0.011118 < 0, J_{11}J_{22}-J_{21}J_{12} = 0.0211209 > 0, d_2J_{11}+d_1J_{22} = 0.0614 > 0$ and $(d_2J_{11}+d_1J_{22} )^2 - 4d_1d_2(J_{11}J_{22}-J_{21}J_{12} ) =  0.00376685 > 0.$ All conditions for the occurrence of Turing instability have been met. Hence proof.
\end{proof}

We define a small perturbation around the positive homogeneous steady state as
\begin{equation}\label{eq:perturb}
\begin{split}
f=& f^* + \alpha_1 \sin^2(nx)\\
m=&m^* + \alpha_2 \sin^2(nx),
\end{split}
\end{equation}
where $\alpha_1, \alpha_2, n \in \mathbf{R}$.

\begin{figure}[ht!]
\centering
 \hspace*{-1.5cm}
\begin{tabular}{cc}
\includegraphics[scale=0.38]{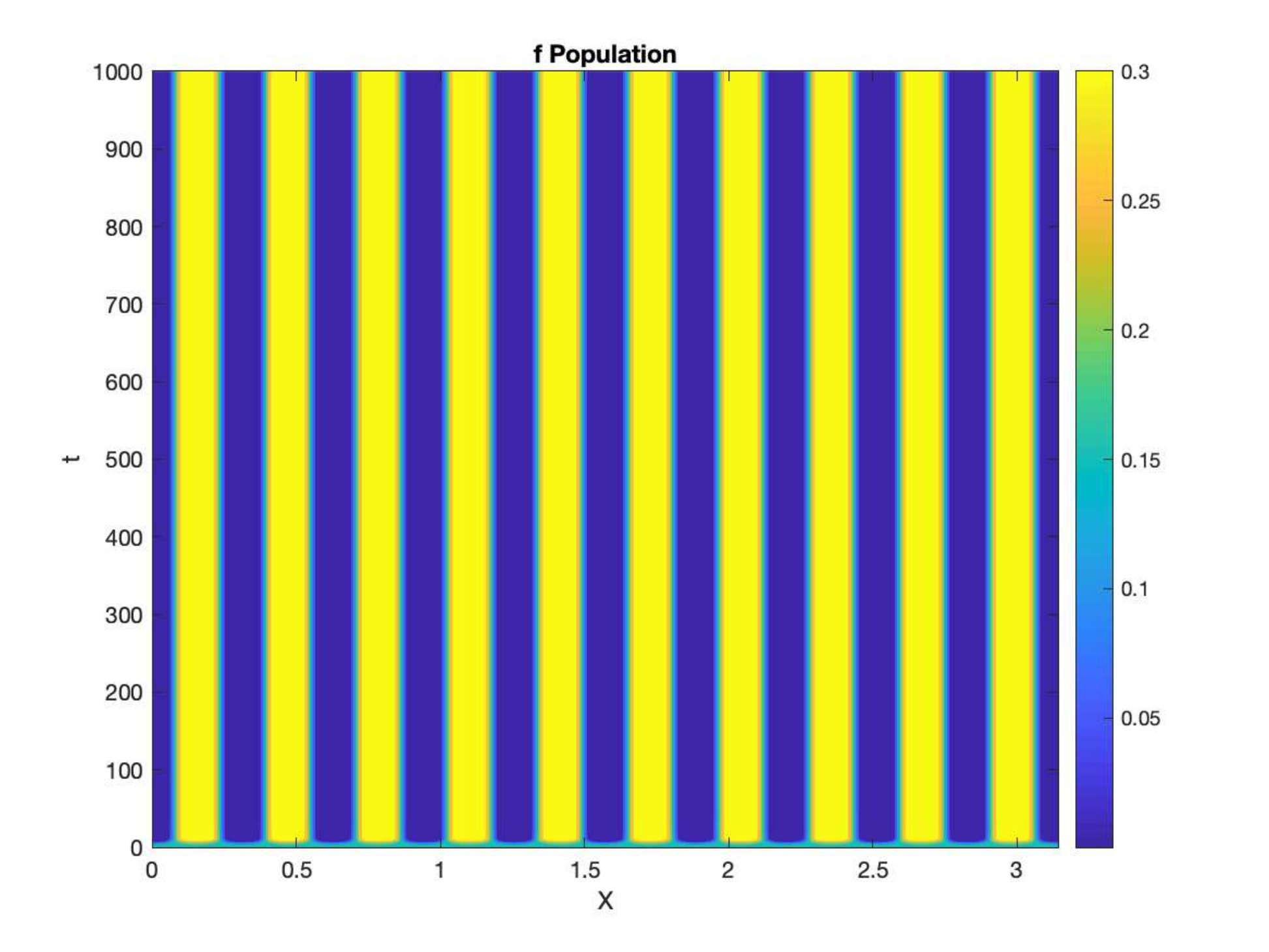} &
\includegraphics[scale=0.38]{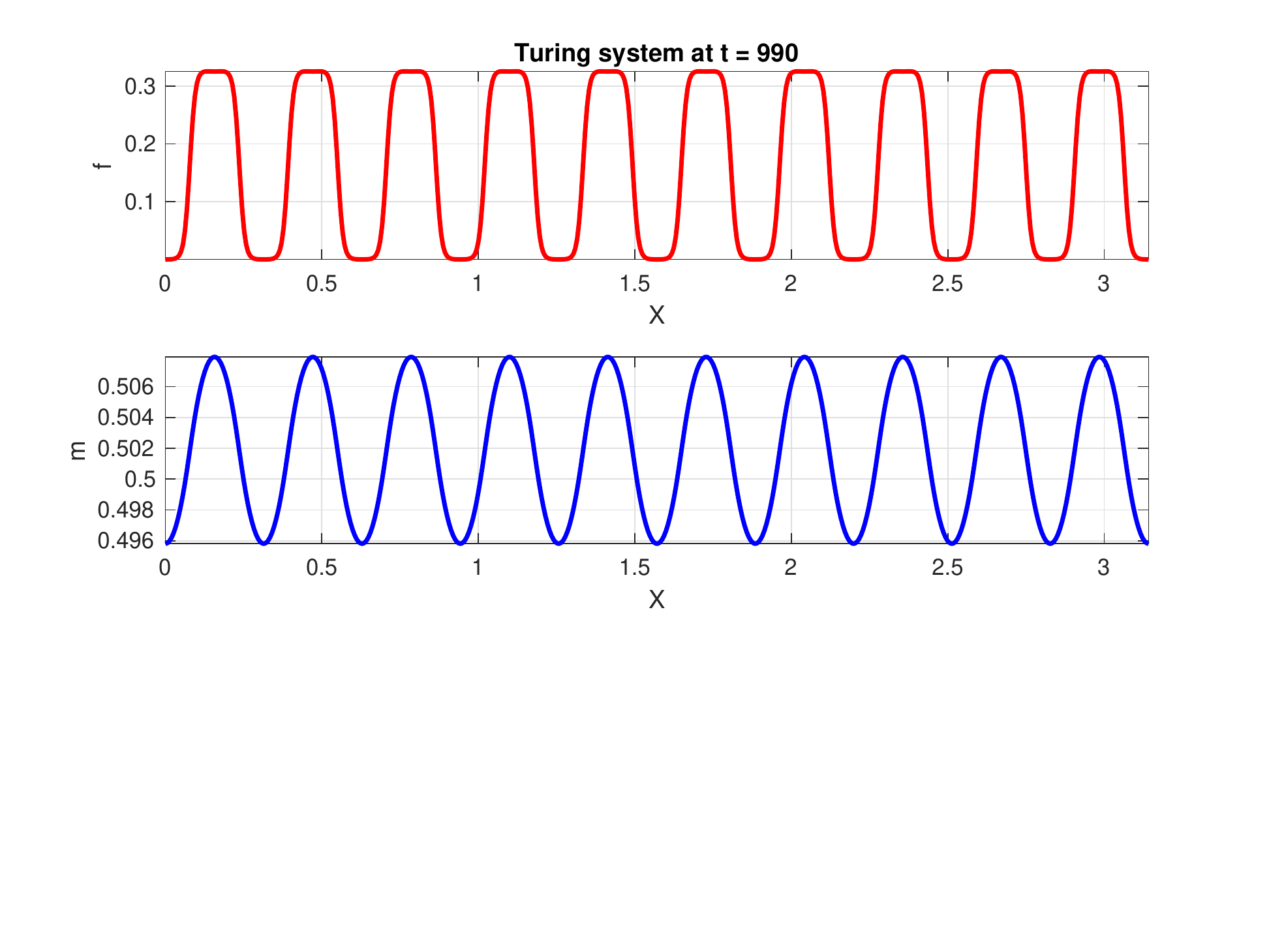} \\
(a) & (b) \\
\end{tabular}
\caption{Turing instability in FHMS with weak Allee in system (\ref{eq: TuringAllee}). (a) Turing patterns in female population. (b) Spatial profile of female and male populations after simulation is run longer ($t=1000$). The diffusion constants are $d_{11}=0.0001,d_{22}= 0.1$. We give the spatially positive homogeneous steady state a small perturbation as in Eq.(\ref{eq:perturb}) where $\alpha_1=0.01$, $\alpha_2=0.01$ and $n=10$}
\label{fig:Turing1}
\end{figure}

\section{Discussion and Conclusion}

In this work, we investigate models that mimic the TYC strategy, such as the FHMS models initiated in \citep{L18,L19}. These are critical research directions in biocontrol, given that supermale production has been successfully achieved by only one group in the US \citep{Schill17,Set16}, and the feminised supermale is still not in existence. Furthermore, given strong evidence of weak Allee effects at low population densities, it is important to incorporate them into models of biocontrol, particularly in those which attempt to reduce the female population density - as a means of driving the overall population to extinction. Here, the weak Allee effect is apt, because it works on positive density dependence, and populations still grow at low densities, only slower. So it is truly ``weaker" than a strong Allee effect, which would imply negative density dependence below the Allee threshold.

The interplay of the harvesting and stocking with the weak Allee effect is important for bi-stability. To elucidate, let us focus on Fig. \ref{fig03}. Herein in (a), there is no weak Allee effect. At a harvesting level of $h=0.5$, one can increase stocking unboundedly, but the system will remain bi-stable. However if we observe (b), where there is a weak Allee effect, at $h=0.5$, if one increases stocking above $s=0.25$, we have only mono stability, and all initial conditions go to the extinction state. This raises the question of what exactly is the extinction boundary in the FHMS system vs the one in the FHMS system with weak Allee effect.

To this end, we have derived various results with important implications for control. We show that the FHMS model with weak Allee effect in conjunction with harvesting and stocking, can produce a non-hyperbolic extinction boundary, see Fig. \ref{fig:nwa}. This, to the best of our knowledge, is the first example of such a boundary in 2 species structured, mating models. The implication for control (for parameter choices say in Fig. \ref{fig:nwa}) is that in situations, where we raise the stocking by just $1\%$, we can get essentially any initial condition, driven to the extinction state. It is important to note, that the weak Allee effect, and harvesting and stocking pressures are not exclusively responsible for a non-hyperbolic shaped  extinction boundary - one can see this shape or ``bending" even when a strong Allee effect is present, see Fig. \ref{fig:Bendings}. Although a complete loop is not formed such as Fig. \ref{fig:nwa}. This result is counter intuitive, in that typically increasing the Allee threshold (when a strong Allee effect is present) results in initial data from a larger region of the phase space, going to extinction - but the opposite is observed here, see Fig. \ref{fig:Bendings}.

Our goal to introduce the results with strong Allee effect here is purely motivational and as a conduit for future investigations. We refrain from the mathematical analysis of the strong Allee effect presently, and relegate it for detailed future work. Note current works in this vein, that consider strong Allee effects in TYC context \citep{JB20,BPL20}, only find hyperbolic extinction boundaries, but this is probably due to the parameter choices of simulations therein.
This begets the question of how in general does the extinction boundary look in mating models, with the inclusion of Allee effects, which seems to be an open problem in ecology \citep{B04}. For systems that are monotone, say of competitive or cooperative type, there is a large body of results, see \citep{S04,SJ09,ST01,SH06,J04} and the references within, that point to the hyperbolic shape of the threshold manifold. However, the systems we consider are non-monotone (easily observed by checking the signs of the off-diagonal terms of the jacobian matrix) - and the threshold boundary for such systems is less investigated.

Our other result concerns the stopping of harvesting or stocking, at various finite times, and initial densities of the wild-type, so as to still yield extinction. This is elucidated via Figs. \ref{fig04} - \ref{fig05}. Lastly, we would like to comment on our results concerning spatial/Turing instability. Here again, we see via Lemma \ref{lem:nt1} and Theorem \ref{thm:t1}, that the FHMS system cannot produce Turing patterns, whereas the FHMS system with weak Allee effect can. Thus the weak Allee effect can cause the population of males and females to spread patchily in space. This has been observed elsewhere in the literature as well \citep{PQB16,PQ16}.

\vspace{0.5cm}
\noindent {\bf{Acknowledgements}}\\
ET and RP would like to acknowledge valuable support from the National Science Foundation via DMS 1839993. MB would like to acknowledge valuable support from the National Science Foundation via DMS 1715044. JB is supported by the grants from Science and Engineering Research Board (SERB), Govt. of India (File No. TAR/2018/000283).

 \section*{Conflict of interest}

 The authors declare that they have no conflict of interest.



\begin{appendices}
\section{Stability analysis for FHMS model without weak Allee effect}\label{AppA}
The system (\ref{eq:3}) has the nullclines $F_i=0$ $(i=1,2)$. Solving these nullclines yields the following equilibria:

$(i)$ Invasive fish-free equilibrium $E_0=\left(0,0\right)$ exists always and is locally asymptotically stable (as $s<1$).

$(ii)$ coexistence equilibria $E^*_{i}=\left(f^*_i,\mu f^*_i\right)$, where $\mu=\frac{(1-r)(1+h)}{r(1-s)}$ and
$$f^*_i=\frac{1}{2(1+\mu)}\left\{1\pm\sqrt{1-\frac{4(1+\mu)(1+h)}{r\alpha\mu}}\right\}, \;\; (i=1,2).$$
The following lemma gives the conditions for existence of the unique interior equilibrium $E^*_i$ $(i=1,2)$:
\begin{lemma}\label{lem:A}.
The interior equilibrium $E^*_i$ of the system \eqref{eq:3}  exists if either one of the following two conditions holds:\\
$(i)$ $s\geq s^*$ and $h_*<h<h^*$;\\
$(ii)$ $0\le h\le h_*$ and $\alpha >\frac{4}{r(1-r)}$, where $s^*=\frac{1}{r}+\left(\frac{1-r}{4}\right)\left(\frac{4h}{r}-\alpha\right)$, $h_*=\frac{r}{4}\left\{\alpha-\frac{4}{r(1-r)}\right\}$ and $h^*=\frac{r}{4}\left(\alpha-\frac{4}{r}\right)$.
\end{lemma}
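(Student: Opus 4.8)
The plan is to reduce the existence of an interior equilibrium to the single requirement that the radicand in the closed form for $f^*_i$ be nonnegative, and then translate that requirement into conditions on $h$, $s$, and $\alpha$. Since $E^*_i=(f^*_i,\mu f^*_i)$ with $\mu=\frac{(1-r)(1+h)}{r(1-s)}>0$ (because $0\le s<1$ and $0<r<1$), the point lies in the open first quadrant exactly when $f^*_i>0$. Write $D:=1-\frac{4(1+\mu)(1+h)}{r\alpha\mu}$. The subtracted term is strictly positive, so $D<1$ always, hence $\sqrt D<1$ whenever $D\ge 0$; thus for both branches $f^*_i=\frac{1}{2(1+\mu)}\bigl(1\pm\sqrt D\bigr)$ one has $f^*_i>0$ precisely when $D\ge 0$, and moreover $f^*_i+\mu f^*_i=\frac12(1\pm\sqrt D)<1$, so the logistic factor $1-f^*_i-\mu f^*_i$ is automatically positive there. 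This gives the first step: \emph{an interior equilibrium exists if and only if $D\ge 0$}, i.e.
\[
\mu\,\bigl(r\alpha-4(1+h)\bigr)\ \ge\ 4(1+h).
\]

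The second step is a case split on the sign of $r\alpha-4(1+h)$. If $r\alpha-4(1+h)\le 0$, i.e.\ $h\ge h^{*}=\tfrac r4\bigl(\alpha-\tfrac4r\bigr)$, the left side is $\le 0<4(1+h)$ and no interior equilibrium exists, so $h<h^{*}$ is necessary. For $h<h^{*}$ the inequality is equivalent to $\mu\ge\mu_{c}$ with $\mu_{c}:=\frac{4(1+h)}{r\alpha-4(1+h)}>0$. Now regard $\mu$ as a function of the stocking rate, $\mu(s)=\frac{(1-r)(1+h)}{r(1-s)}$: it is strictly increasing on $[0,1)$ with minimum $\mu(0)=\mu_{0}:=\frac{(1-r)(1+h)}{r}$ and $\mu(s)\to\infty$ as $s\to 1^{-}$. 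Hence the admissible $s$ are governed by comparing $\mu_{0}$ with $\mu_{c}$: either $\mu_{0}\ge\mu_{c}$, in which case $D\ge 0$ for every admissible $s$, or $\mu_{0}<\mu_{c}$, in which case $D\ge 0$ iff $s\ge s^{*}$, where $s^{*}\in(0,1)$ is the unique root of $\mu(s^{*})=\mu_{c}$ (existence and uniqueness of $s^*$ from strict monotonicity and the limits).

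The last step is elementary algebra that identifies these alternatives with (i) and (ii). Dividing $\mu_{0}\ge\mu_{c}$ by $1+h$ and clearing the positive denominator $r\alpha-4(1+h)$ gives $\mu_{0}\ge\mu_{c}\iff(1-r)\bigl(r\alpha-4(1+h)\bigr)\ge 4r\iff h\le h_{*}=\tfrac r4\bigl\{\alpha-\tfrac{4}{r(1-r)}\bigr\}$; for the range $0\le h\le h_{*}$ to be nonempty one needs $h_{*}\ge 0$, i.e.\ $\alpha>\frac{4}{r(1-r)}$, which is case (ii). In the complementary range $h_{*}<h<h^{*}$ one has $\mu_{0}<\mu_{c}$, so solving $\mu(s^{*})=\mu_{c}$ for $s$ and simplifying yields $s^{*}=\frac1r+\frac{1-r}{4}\bigl(\frac{4h}{r}-\alpha\bigr)$, with $D\ge 0$ iff $s\ge s^{*}$ — this is case (i). I would close by recording $h^{*}-h_{*}=\frac{r}{1-r}>0$ (so $(h_{*},h^{*})$ is genuinely nonempty when $\alpha>\frac{4}{r(1-r)}$) and $s^{*}<1$ (immediate from $\mu(s)\to\infty$), so neither case is vacuous. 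The only mildly delicate point — the main obstacle — is pure bookkeeping: tracking the sign of $r\alpha-4(1+h)$ throughout (it both decides whether $\mu_{c}$ is a meaningful threshold and is the factor one clears from the inequalities) and verifying that (i) and (ii) together exhaust $\{D\ge 0\}$ with the break occurring exactly at $h=h_{*}$. No hard estimate enters; the content is entirely in the equivalence $D\ge 0\Leftrightarrow\mu\ge\mu_{c}$ together with the monotonicity of $s\mapsto\mu(s)$.
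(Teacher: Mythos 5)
Your proposal is correct: reducing existence to nonnegativity of the radicand $D=1-\frac{4(1+\mu)(1+h)}{r\alpha\mu}$ in the displayed formula for $f^*_i$, splitting on the sign of $r\alpha-4(1+h)$, and using monotonicity of $s\mapsto\mu(s)$ reproduces exactly the thresholds $h_*$, $h^*$ and $s^*$ of the lemma (your simplification of $s^*$ agrees with the stated expression). The paper states this lemma without giving a proof, and your argument is precisely the computation implicit in its closed-form expression for $f^*_i$, so there is nothing to contrast; the only cosmetic remark is that the lemma claims only sufficiency, while your bookkeeping in fact yields the sharper if-and-only-if characterization.
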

The linearized system of \eqref{eq:3} about an equilibrium $\hat{E}$ is given by $\frac{dX}{dt}=J(\hat{E})X$, where $X=\left(f \;\; m\right)^T$ and $J(\hat{E})$ is the Jacobian matrix of the system \eqref{eq:3} evaluated at $\hat{E}$. We analyze the stability of system \eqref{eq:3} by using eigenvalue
analysis of the Jacobian matrix evaluated at the appropriate equilibrium. At $E_0$, the eigenvalues of the Jacobian matrix of the system \eqref{eq:3} are $s-1$ and $-(1+h)$. Since $0\le s<1$, all the eigenvalues of the Jacobian matrix $J(E_0)$ are negative. This gives the following lemma:

\begin{lemma}.
The invasive fish-free equilibrium $E_0$ is always locally asymptotically stable.
\end{lemma}

The Jacobian of the system \eqref{eq:3} evaluated at $E^*_i$ is given by
$$J^*_i=\begin{pmatrix}
  -r\alpha \mu f^{*2}_i & r\alpha f^*_i(1-f^*_i-2\mu f^*_i)\\
  (1-r)\alpha \mu f^*_i(1-2f^*_i-\mu f^*_i) & -(1-r)\alpha \mu f^{*2}_i
\end{pmatrix}.$$
We have $\mbox{Tr}(J^*_i)=-\alpha\mu f^{*2}_i<0$ and $\mbox{Det}(J^*_i)=-r(1-r)\alpha^2 \mu f^{*2}_i \left\{1-2(1+\mu)f^*_i\right\}\left\{1-(1+\mu)f^*_i\right\}$.\\ Therefore, the system \eqref{eq:3} is locally asymptotically stable at $E^*_i$ if and only if $\mbox{Det}(J^*_i)>0$. This gives the following lemma:
\begin{lemma}.
Assume that the conditions of Lemma \ref{lem:A} are satisfied. If $\frac{r(1-s)}{2\{1+h-r(h+s)\}}<f^*_i<\frac{r(1-s)}{1+h-r(h+s)}$ holds, then the system \eqref{eq:3} is locally asymptotically stable at $E^*_i$.
\end{lemma}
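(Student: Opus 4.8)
The plan is to reduce local asymptotic stability at $E^*_i$ to a single sign condition on $\mbox{Det}(J^*_i)$, and then to recognize the hypothesized two-sided bound on $f^*_i$ as exactly the condition that makes this determinant positive. The conditions of Lemma \ref{lem:A} are invoked only to guarantee that $E^*_i$ exists, so that $f^*_i>0$ and $\mu=\frac{(1-r)(1+h)}{r(1-s)}>0$ (the latter since $0<r<1$ and $0\le s<1$); these positivity facts are what make the ensuing manipulations legitimate.

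First I would recall from the computation immediately preceding the statement that $\mbox{Tr}(J^*_i)=-\alpha\mu f^{*2}_i<0$, which holds whenever the interior equilibrium exists. For a real $2\times2$ matrix, the Routh--Hurwitz criterion says both eigenvalues have negative real part precisely when the trace is negative and the determinant is positive; since the trace is automatically negative here, local asymptotic stability at $E^*_i$ is \emph{equivalent} to $\mbox{Det}(J^*_i)>0$. Next I would plug in the explicit formula $\mbox{Det}(J^*_i)=-r(1-r)\alpha^2\mu f^{*2}_i\{1-2(1+\mu)f^*_i\}\{1-(1+\mu)f^*_i\}$. The scalar prefactor $-r(1-r)\alpha^2\mu f^{*2}_i$ is strictly negative, so $\mbox{Det}(J^*_i)>0$ holds if and only if the product $\{1-2(1+\mu)f^*_i\}\{1-(1+\mu)f^*_i\}$ is negative. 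Because $2(1+\mu)f^*_i>(1+\mu)f^*_i>0$, the first factor is strictly smaller than the second, so their product is negative exactly when $1-2(1+\mu)f^*_i<0<1-(1+\mu)f^*_i$, i.e. when $\frac{1}{2(1+\mu)}<f^*_i<\frac{1}{1+\mu}$.

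The last step is to rewrite these bounds in the original parameters. Substituting $\mu=\frac{(1-r)(1+h)}{r(1-s)}$ gives $1+\mu=\frac{r(1-s)+(1-r)(1+h)}{r(1-s)}=\frac{1+h-r(h+s)}{r(1-s)}$, and since $1+h-r(h+s)=(1-r)h+(1-rs)>0$ under the standing assumptions, the reciprocal is well defined and positive; hence $\frac{1}{1+\mu}=\frac{r(1-s)}{1+h-r(h+s)}$ and $\frac{1}{2(1+\mu)}=\frac{r(1-s)}{2\{1+h-r(h+s)\}}$, so the inequality $\frac{1}{2(1+\mu)}<f^*_i<\frac{1}{1+\mu}$ is literally the hypothesis of the lemma. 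Reading the chain of equivalences backwards then yields $\mbox{Det}(J^*_i)>0$ and hence local asymptotic stability. I do not expect a genuine obstacle: the argument is a string of equivalences, and the only place demanding care is the sign bookkeeping — confirming the sign of the scalar prefactor of the determinant and verifying $1+h-r(h+s)>0$ (and $1-s>0$) so that clearing denominators never reverses an inequality.
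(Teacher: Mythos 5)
Your proposal is correct and follows essentially the same route as the paper: the paper likewise notes $\mbox{Tr}(J^*_i)=-\alpha\mu f^{*2}_i<0$, reduces stability to $\mbox{Det}(J^*_i)>0$ using the stated determinant formula, and the two-sided bound on $f^*_i$ is exactly the rewriting of $\frac{1}{2(1+\mu)}<f^*_i<\frac{1}{1+\mu}$ in the original parameters. Your write-up simply makes explicit the sign bookkeeping (negative prefactor, ordering of the two factors, positivity of $1+h-r(h+s)$) that the paper leaves implicit.
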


\section{FHMS with weak allee}
\subsection{Proof of boundedness of FHMS model with weak Allee effect \eqref{eq:5} }\label{AppB}

\begin{proof}.
Case $(i)$: \\
Let $0\le f(0)<1$. If possible, let there exists $t>0$ such that $f(t)\geq 1$. We define $t_0=\min\left\{t:f(t)\geq 1\right\}$.
Then $f(t_0)=1$ and $f(t)<1$ for $0\le t<t_0$. Now, we have $f^\prime(t_0)=-r\alpha m^2(t_0)-(1+h)<0$.  By the continuity of $f^\prime(t)$, there exists $\delta>0$ such that $f^\prime (t)<0$ for all $t\in(t_0-\delta,\;\;t_0+\delta)$.

Let $t_1=t_0-\frac{\delta}{2}$. Then $t_0-\delta<t_1<t_0<t_0+\delta$. Since $f(t)$ is strictly decreasing function for all $t\in (t_0-\delta,\;\;t_0+\delta)$, we have $f(t_1)>f(t_0)=1$ which contradicts to the definition of $t_0$. Therefore, we can conclude that $f(t)\geq 1 $ cannot be true for any $t>0$ when $0\le f(0)<1$.\\

Case $(ii)$:\\
 Let $f(0)\geq 1$. We first assume that $f(0)=1$. Then $f^\prime(0)=-r\alpha m^2(0)-(1+h)<0$ and so, there exists $\delta_0>$ such that $f^\prime(t)<0$ for all $t\in [0,\;\;\delta_0)$. Therefore, for all $t>\delta_0$, we have $f(t)<f(\delta_0)<f(0)=1$.

Next assume that $f(0)>1$. Then $f^\prime(0)<0$ implies there exists $\delta_1>0$ such that $f^\prime(t)<0$ for all $t\in [0,\;\;\delta_1).$ Since $f^\prime(t)$ is strictly decreasing for all $t\in[0,\;\;\delta_1)$, it follows that $f(\delta_1)<f(0)$. Suppose that there exists $t>\delta_1$ such that $f(t)>f(0)$. Let $t_2>\delta_1$ be defined by $t_2=\min\left\{t>\delta_1: f(t)>f(0)\right\}$. Then $f(t_2)>f(0)>1$ implies $f^\prime(t_2)=r\alpha f^2(t_2)m(t_2)\left(1-f(t_2)-m(t_2)\right)-(1+h)f(t_2)<0$ and so, there exists $\delta_2>0$ such that $f^\prime(t)<0$ for all $t\in(t_2-\delta_2,\;\;t_2+\delta_2)$.

One now sees that $f^\prime\left(t_2-\frac{\delta_2}{2}\right)<0$ from which it follows that $f\left(t_2-\frac{\delta_2}{2}\right)>f(t_2)>f(0)>1$, contradicting to the definition of $t_2$. Therefore, for $f(0)\geq 1$, there cannot exist $t>t_2$ such that $f(t)>f(0)$. Hence, all the solutions of the system \eqref{eq:5} are contained in some bounded subset in the first quadrant of the $fm$-plane.
\end{proof}

\subsection{Proof of saddle-node bifurcation in FHMS with weak Allee effect \eqref{eq:5} }
\label{AppB1}

\begin{proof}.
Solving $G(g)=0=G^\prime(f)$ we see that the equation $G(f)=0$ has a double root $\frac{2}{3(1+\mu)}$ satisfying $G^{\prime\prime}\left(\frac{2}{3(1+\mu)}\right)=\frac{4}{(1+\mu)}\neq 0$. The two nontrivial nullclines $\phi_i(f,m)=0$ $(i=1,2)$ intersect at the instantaneous interior equilibrium $E^*=\left(\frac{2}{3(1+\mu)},\; \frac{2\mu}{3(1+\mu)}\right)$. At $E^*$, the slopes of $\phi_i(f,m)=0$ $(i=1,2)$ are equal and so $\dfrac{\bar{F}_{1_f}}{\bar{F}_{1_m}}=\dfrac{\bar{F}_{2_f}}{\bar{F}_{2_m}}$, which gives $\mbox{Det}(J^*)=0$. Solving $\mbox{Det}(J^*)=0$, the critical value of $h$, say, $h=h^*$ can be obtained. At at $h=h^*$, if $\mbox{Tr}(J^*)\neq 0$, then the Jacobian of the system \eqref{eq:5} has a simple zero eigenvalue.

Let $\bar{F}(f,m;s)=\left(\bar{F}_1\;\; \bar{F}_2 \right)^T$ and $V^*$ and $W^*$ are eigenvectors corresponding to the zero eigenvalue for $J^*{_|{_{h=h^*}}}$ and $J^{*T}{_|{_{h=h^*}}}$ respectively. We obtain $\bar{F}_h(f,m;h)=\left(\frac{-2}{3(1+\mu)}, \; 0 \right)^T$, $U^*=\left(1 \;\; \mu \right)^T$ and $V^*=\left(1\;\; \frac{3r(1-\mu)}{4(1-r)\mu}\right)^T$ so that  $V^{*T} \bar{F}_{h}\left(E^*;h^*\right)=\frac{-2}{3(1+\mu)}$ and $V^{*T}(D\bar{F}_h)(U^*)=-1$.\\
Due to the complexity in the algebraic expressions involved, we will use numerical simulations to verify $V^{*T}\left[D^2\bar{F}\left(E^*;h^*\right)(U^*,U^*)\right]\neq 0$. Under these conditions, the system \eqref{eq:5} satisfies Sotomayor's theorem for a saddle-node bifurcation at $E^*$ when $h$ crosses $h^*$. This gives the following lemma.

Keeping all parameters fixed and varying the harvesting parameter $h$ we observe that the coexisting equilibria $E^*_i$ $(i=1,2)$ collide to each other, generating a unique instantaneous interior equilibrium. From Fig. \ref{fig02}$(b)$, it is observed that the interior equilibria $E^*_i$ $(i=1,2)$ cease to exist when $h$ is increased through $h^*=0.4$. \\
At $h=0.4$, we have $E^*=(0.2,\;0.47)$ $(i=1,2)$ and
$$J^*_i=\begin{pmatrix}
 0.56 & -0.24\\
 1.96 & -0.84
\end{pmatrix}$$ has a simple zero eigenvalue. Also, we obtain $U^*=\left(1 \;\; 2.33\right)^T$, $V^*=\left(1\;\; -0.286\right)^T$, $V^{*T} \bar{F}_{h}\left(E^*;h^*\right)=-0.2$, $V^{*T}(D\bar{F}_h)(U^*)=-1$ and $V^{*T}\left[D^2\bar{F}\left(E^*;h^*\right)(U^*,U^*)\right]=-29.997$, satisfying the conditions of a saddle-node bifurcation at $E^*$ when $h$ crosses $h=h^*$.
\end{proof}

\subsection{Proof of Hopf bifurcation in FHMS system with weak Allee effect \eqref{eq:5} }
\label{AppB3}
%
%

\begin{proof}.
We consider the following parameter set $r=0.5, \alpha=90, h=0.24 , s =0.59317665$. Then $E_1^*=(0.16534,0.50397)$ is an interior equilibrium. Evaluating the Jacobian of system (\ref{eq:5}) at $E_1^*$, we obtain
\begin{equation}\label{eq:Jacobi}
J_i^*=
\begin{pmatrix}
0.62 &  -0.21318\\
1.86 & -0.62
\end{pmatrix}.
\end{equation}
The corresponding eigenvalues are given as $\lambda_{1,2}=\pm 0.11005$i.  Clearly, the trace and determinant of Eqn. (\ref{eq:Jacobi}) is $0$ and $0.01211 > 0$ respectively. Now, referencing the Jacobian of system (\ref{eq:5}) again, let $\mu=\dfrac{\lambda}{(1-s)}$ where $\lambda=\dfrac{(1-r)(1+h)}{r}$. Then
\begin{equation}\label{eq: HOPHOP}
\dfrac{d}{ds}(\text{\mbox{Tr}}(J_i^*)) =\dfrac{\lambda r \alpha f_i^{*2}}{(1-s)^2}\Bigg(1-f_i^*\Bigg(1+\dfrac{1}{r} +\dfrac{2\lambda}{(1-s)} \Bigg)      \Bigg).
\end{equation}
Now, $\dfrac{d}{ds}(\text{\mbox{Tr}}(J_i^*))\big |_{E_1^*} = -4.6446\neq 0$ at s=0.59317665. Hence, the FHMS model with weak Allee effect  undergoes a Hopf bifurcation with respect to the bifurcation parameter $s=s_{hf}=0.59317665.$

We now vary the stocking rate $s$. As seen in Fig. \ref{AfterHomoclinic}, when $s=0.59318$, the interior equilibrium $E_1^*=(0.16531,0.50388)$ is a spiral source. The eigenvalues associated to $E_1^*$ are $\lambda_1=0.00033529+0.10753 $i and $\lambda_2=0.00033529-0.10753 $i. When the stocking rate  is decreased to $s=0.59316$, $E_1^*=(0.16548,0.50436)$ gains stability and becomes a spiral sink. It's associated eigenvalues are $\lambda_1=-0.0014934+0.12056$i and $\lambda_2=-0.0014934-0.12056$i. Clearly, the pair of complex conjugate eigenvalues cross the imaginary axis of the complex plane when the stocking rate $s$ decreases from $s=0.59318$ to $s=0.59316$. This leads to a subcritical Hopf bifurcation which gives rise to an unstable limit cycle.
\end{proof}

\end{appendices}

\end{document}